\def\MINE{0}    
\def\DRAFT{0}   
\newcommand{\drawasxy}[8]{
\def\aa{#5}
\def\sa{#6}
\def\xa{{cos(\aa)*\sa/(abs(cos(\aa))+abs(sin(\aa)))}}
\def\ya{{sin(\aa)*\sa/(abs(cos(\aa))+abs(sin(\aa)))}}

\def\ab{#7}
\def\sb{#8}
\def\xb{{cos(\ab)*\sb/(abs(cos(\ab))+abs(sin(\ab)))}}
\def\yb{{sin(\ab)*\sb/(abs(cos(\ab))+abs(sin(\ab)))}}

\draw [to path={
.. controls +(\xa,\ya) and +(\xb, \yb) .. (\tikztotarget) \tikztonodes}]
 [->] (#1) to
node[#4] {#3} (#2);
}
\newcommand{\drawJurBlock}[7]{
\def\stavNad{#1}
\def\stavA{#2}
\def\stavAlabel{#3}
\def\stavB{#4}
\def\stavBlabel{#5}
\def\XDifWidth{#6}
\def\YDifHeight{#7}

\node[state,draw=none,inner sep=7pt,minimum size=0pt] [below of=\stavNad] (\stavA) [label=center: \stavAlabel] {};
\node[state,draw=none,inner sep=7pt,minimum size=0pt] [right of=\stavA] (\stavB) [label=center: \stavBlabel] {};
\draw ($ (\stavA) +(-1*\XDifWidth,-1*\YDifHeight)$)
-- ($ (\stavB) +(1*\XDifWidth,-1*\YDifHeight)$)
-- ($ (\stavB) +(1*\XDifWidth,1*\YDifHeight)$)
-- ($ (\stavA) +(-1*\XDifWidth,1*\YDifHeight)$)
-- cycle;
}
\newcommand\pssi{\par\smallskip\indent}
\newcommand\pmsi{\par\medskip\indent}
\newcommand\pssn{\par\smallskip\noindent}
\newcommand\pmsn{\par\medskip\noindent}
\newcommand\pbsn{\par\bigskip\noindent}
\newcommand\pnsi{\par\indent}
\newcommand\pnsn{\par\noindent}
\newcommand\emdef[1]{\textsf{#1}}  
\newcommand{\NOTE}[1]{\if\DRAFT1\marginnote{\footnotesize\color{red}#1}\fi}
\newcommand{\sklabel}[1]{\label{#1}\if\DRAFT1\NOTE{#1}\fi}
\newcommand{\sk}[2]{\textit{{\color{blue}\textbf{#1}#2}}} 
\newlength{\algoindent}   
\theoremstyle{plain}
\newtheorem{theorem}{Theorem}[]
\newtheorem{proposition}[theorem]{Proposition}
\newtheorem{lemma}[theorem]{Lemma}
\newtheorem{corollary}[theorem]{Corollary}
\theoremstyle{definition}
\newtheorem{definition}[theorem]{Definition}
\newtheorem{example}[theorem]{Example}
\theoremstyle{remark}
\newtheorem{remark}[theorem]{Remark}
\newcommand{\cS}{\ensuremath{\mathcal{S}}\xspace}
\renewcommand{\phi}{\varphi}
\renewcommand{\epsilon}{\varepsilon}
\newcommand{\sse}{\subseteq}
\newcommand{\es}{\emptyset}
\newcommand{\sm}{-}  
\newcommand{\N}{\ensuremath{\mathbb{N}}\xspace}
\newcommand{\Z}{\ensuremath{\mathbb{Z}}\xspace}
\newcommand{\ew}{\ensuremath{\lambda}\xspace}      
\newcommand{\al}{\ensuremath{\Sigma}\xspace}       
\newcommand{\alstar}{\ensuremath{\al^*}\xspace}    
\newcommand{\rel}[1]{\ensuremath\mathrm{R}({#1})\xspace}  
\newcommand{\relb}[1]{\ensuremath\mathrm{R}\big({#1}\big)\xspace}  
\newcommand{\trt}{\ensuremath{\bm{t}}\xspace}   
\newcommand{\trs}{\ensuremath{\bm{s}}\xspace}   
\newcommand{\trtinv}{\ensuremath{\trt^{-1}}\xspace} 
\newcommand{\trsinv}{\ensuremath{\trs^{-1}}\xspace} 
\mathchardef\mhyphen="2D
\newcommand{\ep}[0]{\bm{\lambda}}  
\newcommand{\ewp}[0]{\ew/\ew}    
\newcommand{\Px}[1]{\mathop{\mathrm{Prefix}}(#1)}
\newcommand{\dom}[0]{\mathrm{dom}}
\newcommand{\walk}[1]{\langle#1\rangle}
\newcommand{\corr}[1]{{\mathop{\mathrm{corr}(#1)}}}
\newcommand{\Corr}[1]{\mathop{\mathrm{Corr}(#1)}}
\newcommand{\win}[0]{w_{\mathrm{in}}}
\newcommand{\wout}[0]{w_{\mathrm{out}}}
\newcommand{\lbl}[1]{\mathop{\mathrm{label}}(#1)}
\newcommand{\Path}[1]{\mathop{\mathrm{Path}}(#1)}
\newcommand{\Comput}[1]{\mathop{\mathrm{Comput}}(#1)}
\newcommand{\aComput}[1]{\mathop{\mathrm{AccComput}}(#1)}
\newcommand{\len}[1]{|#1|}
\newcommand{\prj}[1]{\mathsf{pr}_{#1}}
\newtheorem{construction}[theorem]{Construction}
\begin{document}

\newcommand{\thetitle}{Partitioning a Symmetric Rational Relation \\ into Two Asymmetric Rational Relations}
\newcommand{\theabstract}{We consider the problem of partitioning effectively a given symmetric (and irreflexive) rational relation R into two asymmetric rational relations. This problem is motivated by a recent method of embedding an R-independent language into one that is maximal R-independent, where the method requires to use an asymmetric partition of R. We solve the problem when R is realized by a zero-avoiding transducer (with some bound k): if the absolute value of the input-output length discrepancy of a computation exceeds k then the length discrepancy of the computation cannot become zero. This class of relations properly contains all recognizable, all left synchronous, and all right synchronous relations. We leave the asymmetric partition problem open when R is not realized by a zero-avoiding transducer. We also show examples of  total word-orderings for which there is a relation R that cannot be partitioned into two asymmetric rational relations such that one of them is decreasing with respect to the given word-ordering.}
\newcommand{\thekeywords}{asymmetric relations, transducers, synchronous relations, word orderings}
\if\MINE1
\begin{center}
	\textbf{\Large\thetitle}
\pbsn
{\large Stavros Konstantinidis$^{1}$, Mitja Mastnak$^{1}$,
Juraj \v Sebej$^{1,2}$}
\pbsn
\parbox{0.90\textwidth}{$^1$Department of Mathematics and Computing Science, Saint Mary's University, Halifax, NS, Canada, \texttt{s.konstantinidis@smu.ca, mmastnak@cs.smu.ca}}
\pssn
\parbox{0.90\textwidth}{$^2$Institute of Computer Science, Faculty of Science, P. J. \v Saf\'arik University, Ko\v sice,  Slovakia, \texttt{juraj.sebej@gmail.com}}
\pbsn
\parbox{0.90\textwidth}{
\textbf{Abstract.}  \theabstract
\pssn
\textbf{Keywords.}  \thekeywords
}  
\end{center}
\else   
\title{\thetitle}
\author{Stavros Konstantinidis\inst{1} \and Mitja Mastnak\inst{1} \and Juraj {\v S}ebej\inst{1,2}}

\institute{
Saint Mary's University, Halifax, Nova Scotia, Canada,\\
\email{s.konstantinidis@smu.ca}, 
\email{mmastnak@cs.smu.ca}
\and
Institute of Computer Science, Faculty of Science, P. J. \v Saf\'arik University, Ko\v sice,  Slovakia
\email{juraj.sebej@gmail.com}
}
\maketitle
\begin{abstract}\theabstract\end{abstract}
\begin{keywords}\thekeywords\end{keywords}
\fi      

\section{Introduction}\label{sec:intro}
The abstract already serves as the first paragraph of the introduction.

The structure of the paper is as follows. The next section contains basic concepts about relations, word orderings and transducers. \underline{Section~\ref{sec:theproblem}} contains the mathematical statement of the rational asymmetric partition problem and the motivation for considering this problem. \underline{Section~\ref{sec:multicopies}} presents the concept of a $C$-copy of a transducer \trt, which is another transducer that contains many copies of the states of \trt (one copy for each $c\in C$). A $C$-copy of \trt, for appropriate $C$, allows us to produce two transducers realizing an asymmetric partition of the relation realized by \trt. \underline{Section~\ref{sec:letter}} deals with the simple case where the transducer is letter-to-letter (Proposition~\ref{P:ltl}). \underline{Section~\ref{sec:k:noncrossing}} introduces zero avoiding transducers \trt with some bound $k\in\N_0$ and shows a few basic properties: the minimum $k$ is less than the number of states of \trt (Proposition~\ref{P:mink}); every left synchronous and every right synchronous relation is realized by some zero-avoiding transducer with  bound 0 (Prposition~\ref{P:lsyncVSzavoid}). \underline{Section~\ref{sec:k:noncrossing:partition}} shows a construction, from a given input-altering transducer \trs, that produces a certain $C$-copy $\alpha(\trs)$ of \trs such that  $\alpha(\trs)$ realizes the set of all pairs in $\rel{\trs}$ for which the input is greater than the output with respect to the radix total order of words (Theorem~\ref{th:qs}). This construction solves the rational asymmetric partition problem when the given relation is realized by a zero-avoiding transducer. \underline{Section~\ref{sec:variations}} discusses a variation of the problem, where we have a certain fixed total word ordering $[>]$ (e.g., the radix one) and we want to know whether there is a rational symmetric relation $S$ such that not both of $S\cap[>]$ and $S\cap[<]$ are rational (Proposition~\ref{P:variation}). This section also offers as an open problem the general rational asymmetric partition problem (that is when the given $R$ is not realized by a zero-avoiding transducer). The last section contains a few concluding remarks.

\section{Basic Terminology and Notation}\label{sec:notation}
We assume the reader is familiar with  basic concepts of formal languages: alphabet, words (or strings), empty word \ew, language (see e.g., \cite{FLhandbookI,HopcroftUllman}).  We shall use a totally ordered alphabet $\Sigma$; in fact for convenience we assume that $\Sigma=\{0,1,\ldots,q-1\}$, for some integer $q>0$. If a word $w$ is of the form $w=uv$ then $u$ is called a prefix and $v$ is called a suffix of $w$. We shall use $x/y$ to denote the \emdef{pair of words} $x$ and $y$. 
\pnsi 
A \emdef{(binary word) relation} $R$ over $\al$ is a subset of $\al^*\times \al^*$, that is, $R\sse \al^*\times \al^*$. 
We shall use the \pssn
\centerline{infix notation $xRy$ to mean that $x/y\in R$; then, $x/\!\!\!\!Ry$ means $x/y\notin R$.}
\pssn 
The \emdef{domain} $\dom R$ of $R$ is the set $\{x\mid x/y\in R\}$. The inverse $R^{-1}$ of $R$ is the relation $\{y/x\mid x/y\in R\}$.	 
\pmsn
\textbf{Word orderings.} The following types of, and notation about relations over $\al$ are important in this work, where $x,y,z$ are any words in $\al^*$. 
\begin{itemize}
	\item A relation $R$ is called (i) \emdef{irreflexive}, if $x/\!\!\!\!Rx$; (ii) \emdef{reflexive}, if $xRx$ for all $x\in\dom R$; (iii)  \emdef{symmetric}, if $xRy$ implies $yRx$; (iv) \emdef{transitive}, if `$xRy\text{ and }yRz$' implies $xRz$.
	\item A relation $A$ is called \emdef{asymmetric}, if $xAy$ implies $y/\!\!\!\!Ax$. In this case, $A$ must be irreflexive. Moreover we have that $$A\cap A^{-1}=\emptyset \quad\text{ and }\quad A\sse(\al^*\times\al^*)\setminus\{w/w:w\in\al^*\}.$$
	 A \emdef{total asymmetry} is an asymmetric relation $A$ such that either $uAv$ or $vAu$, for all words $u,v$ with $u\neq v$. We shall use the notation `$[>]$' for an arbitrary total asymmetry, as well as the notation `$[>_\alpha]$' for a specific total asymmetry  where $\alpha$ is some identifying subscript. Then, we shall write $u>v$ to indicate that $u/v\in[>]$. Moreover, we shall write   $[<]$ (and $[<_\alpha]$) for the inverse of $[>]$ (and $[<_\alpha]$).
	\item A \emdef{total strict ordering} $[<]$ is a total asymmetry that is also transitive. Examples of total strict orderings are the \emdef{radix} `$[<_r]$' and the \emdef{lexicographic} `$[<_l]$'  ordering.  The lexicographic ordering is the standard dictionary order, for example, $112<_l12<_l3$. The radix ordering is the standard integer ordering when words are viewed as integers and no symbol of $\al$ is interpreted as zero: $3<_r12<_r112$. In both of these orderings, the empty word is the smallest one.
\end{itemize}

\pmsn
\textbf{Pathology.} A path $P$ of a labelled (directed) graph $G=(V,E)$ is a string of consecutive edges, that is, $P\in E^*$ and is of the form $$P=(q_0,\alpha_1,q_1)(q_1,\alpha_2,q_2)\cdots(q_{\ell-1},\alpha_\ell,q_\ell),$$ for some integer $\ell\ge0$, where each $q_i\in V$, each $\alpha_i$ is a label, and each $(q_{i-1},\alpha_i,q_i)\in E$. We shall use the following shorthand notation for that path
\[
P=\walk{q_{i-1},\alpha_i,q_i}_{i=1}^{\ell}.
\]
The \emdef{empty} path is denoted by $\ep$.
We shall \emdef{concatenate} paths in the same way that we concatenate words, provided that the concatenated sequence consists of consecutive edges; thus, $PQ$ is a path of $G$ when  $P,Q$ are paths and the last vertex in $P$ is equal to the first vertex in $Q$. As usual, $P\ep=\ep P=P$, for all paths~$P$. A \emdef{cycle} is a path as above such that $q_0=q_\ell$. The path $P$ \emdef{contains} a cycle $C$, if $C=\walk{q_{j-1},\alpha_j,q_j}_{j=s}^{t}$, for some indices $s,t$, with $1\le s\le t\le\ell$ and $q_{s-1}=q_t$. In this case, we have $P=BCD$ for some paths $C,D$. Moreover, the path $BD$ that results when we remove $C$ from $P$ is well-defined. 
Based on this terminology, we have the following remark.
\begin{remark}
	If $P=BAD$ is a path of $G$, where $A$ is a cycle, and the path $BD$ contains a cycle $C$, then also $P$ contains the cycle~$C$.
\end{remark}
\pnsn
Let $\cS$ be a subset of the cycles contained in $P$. The \emdef{first $\cS$-cycle} of $P$ is the cycle $\walk{q_{j-1},\alpha_j,q_j}_{j=s}^{t}$ of $P$ that has the smallest index $t$ among the cycles of $P$ that are in $\cS$. The path $P-\cS$ is the path that results if we remove from $P$ all $\cS$-cycles: remove the first $\cS$-cycle of $P$ to get a path $P_1$, then the first $\cS$-cycle of $P_1$, etc. 

\pssn \textbf{Transducers.} (\cite{Be:1979,Yu:handbook,Sak:2009})
A transducer is a quintuple\footnote{In general, \trt has an input and an output alphabet, but in our context these are equal.} $\trt=(Q,\al,E,I,F)$ such that $(Q,E)$ is a labelled graph with labels of the form $x/y$, for some $x,y\in\al\cup\{\ew\}$, and $I,F\sse Q$ with $I\neq\es$. The set of vertices $Q$ is also called the set of \emdef{states} of \trt. The set of edges $E$ is also called the set of \emdef{transitions} of \trt. In a transition $e=(p,x/y,q)$ of \trt, $p$ is called the \emdef{source} state of $e$, and $q$ is called the destination state of $e$. The sets $I,F$ are called the initial and final states of \trt, respectively. The \emdef{label} of a path $\walk{q_{i-1},x_i/y_i,q_i}_{i=1}^{\ell}$ is the pair $x_1\cdots x_\ell/y_1\cdots y_\ell$. We write $\lbl{P}$ to denote the label of a path $P$. In particular, $\lbl{\ep}=\ew/\ew$. A \emdef{computation} of \trt is a path $P$ of \trt such that, either $P$ is empty, or the first state of $P$ is in $I$. We write $\Comput{\trt}$ to denote the set of all computations of \trt. The computation $P$ is called \emdef{accepting} if, either $P=\ep$  and $I\cap F\neq\es$, or $P\neq\ep$ and the last state of $P$ is in $F$. We write $\aComput{\trt}$ to denote the set of accepting computations of \trt. The relation \emdef{realized by} \trt  is the set
\[
\rel{\trt}=\{\,\lbl P\mid P\in\aComput{\trt} \}.
\] 
If $\rel{\trt}$ is irreflexive then \trt is called \emdef{input-altering}. If $\rel{\trt}\sse[>]$, for some total asymmetry $[>]$, then \trt is called \emdef{input-decreasing} (with respect to $[>]$). The \emdef{size} $|\trt|$ of \trt is the number of states of \trt plus the number of transitions of \trt. 
\pnsi
If $\trt, \trs$ are transducers then: $\trtinv$ denotes the inverse of \trt such that $\rel{\trtinv}=\rel{\trt}^{-1}$; $\trt\trs$ denotes a transducer such that $\rel{\trt\trs}=\rel{\trt}\rel{\trs}$; $\trt\lor\trs$ denotes a transducer such that  $\rel{\trt\lor\trs}=\rel{\trt}\cup\rel{\trs}$.

\section{Statement and Motivation of the Main Problem}\label{sec:theproblem}
%
In this section, we make precise the problem we are dealing with, and we explain its context as well as the motivation for considering it. 
\pnsi
Let $I$ be an irreflexive relation. An \emdef{asymmetric partition} of $I$ is a partition $\{A,B\}$ of $I$ such that $A,B$ are asymmetric. If $I$ is rational, then a \emdef{rational asymmetric partition} of $I$ is an asymmetric partition $\{A,B\}$ of $I$ such that $A,B$ are rational.   

\begin{remark}
	If $I$ is any irreflexive relation and $[>]$ is any total asymmetry then $\{I\cap[>],\,I\cap[<]\}$ is an asymmetric partition of $I$. As any asymmetric $A$ is irreflexive, we also have that $\{A\cap[>],\,A\cap[<]\}$ is an asymmetric partition of $A$. If $S$ is a symmetric and irreflexive relation and $\{A,B\}$ is an asymmetric partition of $S$ then $B=A^{-1}$.
\end{remark}

\pssn
\textbf{The Rational Asymmetric Partition Problem.} 
Which symmetric-and-irreflexive  rational relations have a rational asymmetric partition?

\begin{remark}
	Any relation $R$ that is not irreflexive cannot have an asymmetric partition; otherwise, $R$ would contain a pair $u/u$ and this cannot be an element of any asymmetric relation. We also have the following observation. 
	\begin{itemize}
	\item
	If $A$ is any rational asymmetric relation then $\{A,A^{-1}\}$ is a rational asymmetric partition of $A\cup A^{-1}$.
	\end{itemize} 
\end{remark}

\pssn
\textbf{Motivation for the above problem.} For a relation $R$ and language $L$, we say that $L$ is $R$-independent, \cite{Shyr:Thierrin:relations,SSYu:book}, if
$$uRv, u\in L, v\in L \to u=v.$$
If $R$ is irreflexive and realized by some transducer \trt then, \cite{Kon:2017}, the above condition is equivalent to
$\trt(L)\cap L=\emptyset.$
In any case, we have that $L$ is $R$-independent, if and only if it is $(R\cup R^{-1})$-independent; and of course the relation $(R\,\cup R^{-1})$ is \emph{always} symmetric. 
The concept of independence provides tools for studying code-related properties such as prefix codes and error-detecting languages (according to the relation $R$).   
In \cite{KonMas:2017}, for a given input-altering transducer \trt and regular language $L$ that is $\rel{\trt}$-independent, the authors provide a formula for embedding $L$ into a maximal $\rel{\trt}$-independent language, provided that \trt is input-decreasing with respect to $[>_r]$. Of course then, $\rel{\trt}$ is asymmetric. Thus, to embed an $S$-independent language $L$ into a maximal one, where $S$ is symmetric, it is necessary to find a transducer \trt such that $S=\rel{\trt}\cup\rel{\trtinv}$ and $\rel\trt$ is asymmetric .

\section{Multicopies of Transducers}\label{sec:multicopies}
%
In this section we fix a finite nonempty set $C$, whose elements are called \emdef{copy labels}. Let $S$ be any set and let $c\in C$. The \emdef{copy} $c$ of $S$ is the set $S^c=\{s^c\mid s\in S\}$.

\begin{definition}\label{def:tr:copy}\NOTE{def:tr:copy}
Let $\trt = (Q,\Sigma, T, I, F)$ be a transducer. A \emdef{$C$-copy} of \trt is any transducer $\trt'=(Q',\Sigma,T',I',F')$ satisfying the following conditions.
\begin{enumerate}
	\item 
	$Q'=\cup_{c\in C}Q^c$, $\>I'\sse \cup_{c\in C}I^c$,  $\>F'\sse\cup_{c\in C}F^c$.
	\item 
	$T'\sse\big\{(p^c,x/y,q^d)\mid c,d\in C,\,(p,x/y,q)\in T\big\}$. If  $e'=(p^c,x/y,q^d)\in T'$ then the  edge $(p,x/y,q)$ of $\trt$ is called \emdef{the edge of \trt corresponding to} $e'$ and is denoted by $\corr{e'}$.
\end{enumerate}
For each edge $e$ of $\trt$,  we define the \emdef{set of edges of $\trt'$ corresponding to} $e$ to be the set
	\[
	\Corr{e}=\{e'\mid e=\corr{e'}\}.
	\]
\end{definition}

\begin{example}
	The transducer $\alpha_0(\trs)$ in Fig.~\ref{fig:example2} is a  $C$-copy of \trs, where $C=\{\ew,A,R\}$. 
	It has three copies of the states of \trs. We have that 
	\[
	\Corr{q_1,0/1,q_2}=\{(q_1^\ew,0/1,q_2^R),\;(q_1^A,0/1,q_2^A),\;(q_1^R,0/1,q_2^R)\}.
	\] 
	Each edge $(p,x/y,q)$ of \trs has corresponding edges in $\alpha_0(\trs)$ of the form $(p^c,x/y,q^d)$ such that the source state $p^c$ is in the copy $c$ (initially, $c=\ew$) and the destination state $q^d$ is in the copy $d$, where possibly $d=c$. Edges of $\alpha_0(\trs)$ with source state in the copies $A,R$ have a destination state  in the same copy. On the other, an edge of $\alpha_0(\trs)$, with some label $x/y$, whose source state is in the copy $\ew$ has  a destination state in the copy $\ew$ if $x=y$; in the copy $A$ if $x>_ry$; and in the copy $R$ if $x<_ry$. As $\alpha_0(\trs)$ has final states only in the copy $A$, it follows that for any $u/v\in\rel{\alpha_0(\trs)}$ we have that $u>_rv$ and $u/v\in\rel{\trs}$.
	This example is useful when solving the rational asymmetric partitioning problem for letter-to-letter transducer---see Section~\ref{sec:letter}. 
\end{example}

\begin{remark}\label{rem:tr:copy}\NOTE{rem:tr:copy}
	The below observations follow from the above definitions and are simple and helpful facts to use when proving statements about $C$-copies of transducers. Let $\trt$ be a transducer and let $\trt'$ be a $C$-copy of \trt. 
	\begin{enumerate}
	    \item The set of edges of $\trt'$ is equal to $\bigcup_{e\in T}\Corr{e}$, where $T$ is the set of edges of \trt. Thus, to define the edges of a $C$-copy of \trt, it is sufficient to specify, for all edges $e$ of \trt, the sets $\Corr e$. 
		\item If $\trt'$ has a state that is both initial and final then so does \trt. Thus, if $\ep\in\aComput{\trt'}$ then we have that $\ep\in\aComput{\trt}$ and $\ewp\in\rel\trt.$
	\end{enumerate}
\end{remark}

\begin{definition}\label{D:corr:path}
Let $\trt'$ be a $C$-copy of a transducer \trt, and let $P'=\walk{e_i'}_{i=1}^{\ell}\in\Path{\trt'}\sm\{\ep\}$. For each edge 
$e_i'$ of $P'$, let $e_i=\corr{e_i'}$. Then the string 
$\walk{e_i}_{i=1}^\ell$ is a path of \trt and is called  the (unique) \emdef{path of \trt corresponding to} $P'$ and is denoted by 
$$\corr{P'}.$$ 
Conversely, if $P=\walk{e_i}_{i=1}^{\ell}$ is a path of \trt, then we define the \emdef{set of paths of $\trt'$ corresponding to} $P$ to be the set of all paths of $\trt'$ of the form $\walk{e_i'}_{i=1}^{\ell}$, where each $e_i'\in\Corr{e_i}$; this set is denoted by 
$$\Corr{P}.$$
    We also define $\corr{\ep}=\ep$ and $\Corr{\ep}=\{\ep\}$.
\end{definition}

\begin{remark}\label{rem:tr:copy2}\NOTE{rem:tr:copy2}
	Let $\trt$ 
	be a transducer and let $\trt'$  
	be a $C$-copy of \trt. Let $P\in\Path{\trt}$ and $P'\in\Path{\trt'}$.
	\begin{enumerate}
	    \item We have that $P'\in\Corr{P}$ if and only if $P=\corr{P'}$.
		\item If $e'$ is an edge of $\trt'$ and $P'e'\in\Path{\trt'}$, then $\corr{P'e'}=\corr{P'}\,\corr{e'}$.
	\end{enumerate}
\end{remark}

\begin{lemma}\label{L:faithful}
	If $\trt'$ is a  $C$-copy of a transducer \trt, then 
	$\rel{\trt'}\sse\rel{\trt}. $
\end{lemma}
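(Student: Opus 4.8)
The plan is a direct unwinding of the definitions of Section~\ref{sec:multicopies}. Fix $u/v\in\rel{\trt'}$; by definition there is an accepting computation $P'\in\aComput{\trt'}$ with $\lbl{P'}=u/v$, so it suffices to exhibit an accepting computation of \trt carrying the same label, since then $u/v\in\rel{\trt}$. Everything hinges on the corresponding path $\corr{P'}$ of Definition~\ref{D:corr:path}.

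I would split into two cases according to whether $P'$ is empty. If $P'=\ep$, then $u/v=\lbl{\ep}=\ewp$, and Remark~\ref{rem:tr:copy}(2) gives $\ep\in\aComput{\trt}$, hence $\ewp\in\rel{\trt}$, as required. If $P'=\walk{e_i'}_{i=1}^{\ell}\neq\ep$, set $P=\corr{P'}=\walk{\corr{e_i'}}_{i=1}^{\ell}$, which is a path of \trt by Definition~\ref{D:corr:path}. I then check three things. First, $P$ is a computation of \trt: writing $e_1'=(p^c,x_1/y_1,q^d)$, its source $p^c$ lies in $I'\sse\bigcup_{c\in C}I^c$ (Definition~\ref{def:tr:copy}(1)), so $p\in I$, and $\corr{e_1'}=(p,x_1/y_1,q)$ has source $p\in I$; thus the first state of $P$ is initial. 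Second, $P$ is accepting: writing $e_\ell'=(p^c,x_\ell/y_\ell,q^d)$, its destination $q^d$ lies in $F'\sse\bigcup_{c\in C}F^c$ (Definition~\ref{def:tr:copy}(1)), so $q\in F$, and $\corr{e_\ell'}$ has destination $q\in F$; thus the last state of $P$ is final. Third, $\lbl{P}=\lbl{P'}$: by Definition~\ref{def:tr:copy}(2) each $\corr{e_i'}$ carries exactly the same label $x_i/y_i$ as $e_i'$, so concatenating labels along $P$ yields $x_1\cdots x_\ell/y_1\cdots y_\ell=\lbl{P'}$. Combining, $u/v=\lbl{P'}=\lbl{P}\in\rel{\trt}$, which proves $\rel{\trt'}\sse\rel{\trt}$.

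I do not expect a genuine obstacle; the statement is essentially bookkeeping. The only points needing (momentary) care are that passing from $e_i'$ to $\corr{e_i'}$ preserves both the label and the underlying source/destination states, and that the $C$-copy structure only shrinks the initial and final state sets ($I'\sse\bigcup_{c}I^c$ and $F'\sse\bigcup_c F^c$) rather than enlarging them — both immediate from Definition~\ref{def:tr:copy}. One could instead phrase the non-empty case as a one-line induction on $\ell$ using Remark~\ref{rem:tr:copy2}(2), but the edge-by-edge argument above is already complete.
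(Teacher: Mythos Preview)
Your proof is correct and follows essentially the same approach as the paper: take an accepting computation $P'$ of $\trt'$, handle the empty case via Remark~\ref{rem:tr:copy}, and otherwise pass to $\corr{P'}$ while checking that initial/final states and labels are preserved. The paper's version is slightly terser but the argument is identical.
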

\begin{proof}
	Let $u/v\in\rel{\trt'}$. We show that $u/v\in\rel\trt$. There is $P'\in\aComput{\trt'}$ with $\lbl{P'}=u/v$. If $P'=\ep$ then $u/v=\ewp$ and the statement follows from Remark~\ref{rem:tr:copy}. Now suppose that $P'=\walk{q_{i-1}^{c_{i-1}},u_i/v_i,q_{i}^{c_i}}_{i=1}^{\ell}$ with $\ell>0$. Then, $P=\corr{P'}$ is a path of \trt. Moreover, as $q_0^{c_0}$ is initial in $\trt'$ and $q_\ell^{c_\ell}$ is final in $\trt'$, the state $q_0$ is initial in \trt and $q_\ell$ is final in \trt; hence $P\in\aComput{\trt}$. 
	Also as $P= \walk{q_{i-1},u_i/v_i,q_i}_{i=1}^{\ell}$, we have $\lbl P=u/v$. Hence, $u/v\in\rel\trt$.
\end{proof}

\section{Asymmetric Partition of Letter-to-letter Transducers}\label{sec:letter}
A transducer \trt is called \emdef{letter-to-letter}, \cite{Sak:2009}, if all its transition labels are of the form $\sigma/\tau$, where $\sigma,\tau\in\al$. Here we provide a solution to the asymmetric partition problem for letter-to-letter transducers in Proposition~\ref{P:ltl}, which is based on Construction~\ref{con:ltl} below. We note that this construction is a special case of the more general construction for zero-avoiding transducers in Section~\ref{sec:k:noncrossing:partition}, but we present it separately here as it is simpler than the general one.

\begin{construction}\label{con:ltl}\NOTE{con:ltl}
Let $\trs = (Q,\Sigma, T, I, F)$ be a letter-to-letter transducer. Let $C=\{\ew,A,R\}$. We construct a transducer $\alpha_0(\trs)=(Q',\Sigma,T',I',F')$, which is a  $C$-copy of $\trs$, as follows. 
\pmsi
First, $\>Q'=Q^\ew\cup Q^A\cup Q^R$,\quad $I'=I^\ew\>$ and $\>F'=F^A$. 
\pssi Then, $T'$ is defined  as follows.
\begin{eqnarray*}
	T' & =   & \{(p^c,\sigma/\tau,q^c)\mid (p,\sigma/\tau,q)\in T,\>c\in \{A,R\}\} \\
	   &\cup & \{(p^\ew,\sigma/\sigma,q^\ew)\mid (p,\sigma/\sigma,q)\in T\} \\
	   &\cup & \{(p^\ew,\sigma/\tau,q^A)\mid (p,\sigma/\tau,q)\in T,\>\sigma>_r\tau\} \\
	   &\cup & \{(p^\ew,\sigma/\tau,q^R)\mid (p,\sigma/\tau,q)\in T,\>\sigma<_r\tau\}.
\end{eqnarray*}
\end{construction}

\pnsn\textbf{Explanation.}
The constructed transducer $\alpha_0(\trs)$ contains two exact copies of \trs: a copy whose states are the $A$ copies of $Q$, and a copy whose states are the $R$ copies of $Q$; it also contains a sub-copy of \trs which contains a $\ew$ copy of $Q$ and only transitions with labels of the form $\sigma/\sigma$. Any computation $P'$ of $\alpha_0(\trs)$ starts at an initial state $i^\ew$ and continues with states in the $\ew$ copy of $Q$ as long as transition labels are of the form $\sigma/\sigma$. If a transition label is $\sigma/\tau$ with $\sigma>_r\tau$ then the computation $P'$ continues in the $A$ copy and never leaves that copy. As final states are only in the $A$ copy, we have that $P'$ is accepting if and only if $\corr{P'}$ is accepting and $\lbl{P'}=u/v$ such that $u$ is of the form $x\sigma y_1$ and $v$ of the form $x\tau y_2$ with $\sigma>_r\tau$ and $\len u=\len v$. Note that, in the computation $P'$, if a transition label is $\sigma/\tau$ with $\sigma<_r\tau$ and the current state is in the $\ew$ copy, then $P'$ would continue in the $R$ copy of $\alpha_0(\trs)$, which has no final states, so $P'$ would not be accepting.

\begin{remark}
	In fact the $R$ copy of \trs is not necessary as it has no final states. It was included to make the construction a little more intuitive.
\end{remark}

\pmsn Using Lemma~\ref{L:faithful} and based on the above explanation, we have the following lemma---it is a special case of the lemma for zero-avoiding transducers in Section~\ref{sec:k:noncrossing:partition} where a more rigorous proof is given.

\begin{lemma}\label{L:ltl}
	Referring to Construction~\ref{con:ltl}, we have that 
	\[
	\relb{\alpha_0(\trs)}\>=\>\rel\trs\;\cap\;\{u/v:u>_rv\}.
	\]
\end{lemma}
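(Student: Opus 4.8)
The plan is to prove the two inclusions separately, using Lemma~\ref{L:faithful} for the easy direction and a direct analysis of computations for the converse.

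For the inclusion $\relb{\alpha_0(\trs)}\sse\rel\trs\cap\{u/v:u>_rv\}$, I would take any $u/v\in\relb{\alpha_0(\trs)}$ with witnessing accepting computation $P'$. By Lemma~\ref{L:faithful}, $u/v\in\rel\trs$, so it remains to show $u>_rv$. Since $F'=F^A$ is nonempty only in the $A$-copy, $P'$ cannot be empty (and $\ep$ would anyway give $\ewp$, which is not in $\{u/v:u>_rv\}$), and its last state lies in $Q^A$. Its first state lies in $I'=I^\ew$, i.e.\ in the $\ew$-copy. I would then prove by induction along $P'$ that the copy labels of the states visited follow the pattern ``a (possibly empty) prefix in the $\ew$-copy, then everything in the $A$-copy,'' because the only transitions leaving the $\ew$-copy that land in $A$ are those labelled $\sigma/\tau$ with $\sigma>_r\tau$ (Construction~\ref{con:ltl}, third line of $T'$), transitions inside the $\ew$-copy are all of the form $\sigma/\sigma$ (second line), and once in the $A$- or $R$-copy one never changes copy (first line). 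Since $P'$ ends in $A$, it must use exactly one ``$\ew\to A$'' transition, say at position $j$, with label $\sigma/\tau$ and $\sigma>_r\tau$; all labels before position $j$ are of the form $\rho/\rho$, so if $x$ denotes their common input$=$output string then $u=x\sigma y_1$ and $v=x\tau y_2$ for suitable $y_1,y_2$. Because \trs is letter-to-letter, every transition reads and writes exactly one letter, so $\len u=\len{P'}=\len v=\len{x}+1+\len{y_1}$ and $\len{y_1}=\len{y_2}$. Two equal-length words sharing the prefix $x$ and differing next at $\sigma>_r\tau$ satisfy $u>_rv$ by definition of the radix order, giving the claim.

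For the reverse inclusion, I would start from $u/v\in\rel\trs$ with $u>_rv$, fix an accepting computation $P=\walk{q_{i-1},\sigma_i/\tau_i,q_i}_{i=1}^\ell$ of \trs with $\lbl P=u/v$. Since \trs is letter-to-letter, $u=\sigma_1\cdots\sigma_\ell$ and $v=\tau_1\cdots\tau_\ell$, both of length $\ell$; and $u>_rv$ with $\len u=\len v$ forces, by definition of the radix order, that there is an index $j$ with $\sigma_i=\tau_i$ for $i<j$ and $\sigma_j>_r\tau_j$. I would then lift $P$ to the computation $P'$ of $\alpha_0(\trs)$ defined by: states $q_0^\ew,\dots,q_{j-1}^\ew$ (using the $\ew$-copy transitions, legal since $\sigma_i/\tau_i=\sigma_i/\sigma_i$ for $i<j$), the transition $(q_{j-1}^\ew,\sigma_j/\tau_j,q_j^A)$ (legal since $\sigma_j>_r\tau_j$), and then $q_j^A,\dots,q_\ell^A$ in the $A$-copy. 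One checks $P'\in\Corr P$, the first state $q_0^\ew\in I^\ew=I'$, and the last state $q_\ell^A\in F^A=F'$, so $P'$ is an accepting computation with $\lbl{P'}=\lbl P=u/v$, whence $u/v\in\relb{\alpha_0(\trs)}$. (The boundary case $j=\ell$, where $u/v$ differs only in the last letter, still works; the case $u=v$ does not arise since $u>_rv$.)

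The only genuinely delicate point is the structural claim about the copy-pattern of an accepting computation in the first inclusion — that a computation entering the $A$- or $R$-copy stays there, and that reaching a final state forces exactly one ``$\ew\to A$'' crossing at an index where $\sigma_j>_r\tau_j$ while everything before is $\sigma/\sigma$. Everything else is bookkeeping with the length-equality that comes for free from letter-to-letter transitions, plus the elementary characterization of $>_r$ on equal-length words. This is exactly the argument that Section~\ref{sec:k:noncrossing:partition} will carry out in the more general zero-avoiding setting, so here I would keep it short and point forward.
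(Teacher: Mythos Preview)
Your proposal is correct and follows essentially the same approach as the paper: use Lemma~\ref{L:faithful} for $\relb{\alpha_0(\trs)}\sse\rel\trs$, then argue via the copy-pattern of an accepting computation (a $\ew$-prefix with labels $\sigma/\sigma$, a single crossing into the $A$-copy at a position with $\sigma>_r\tau$, and $\len u=\len v$ from letter-to-letter) to obtain $u>_rv$; and for the converse lift an accepting computation of $\trs$ along the first-difference index. The paper in fact only sketches this and defers the rigorous version to Section~\ref{sec:k:noncrossing:partition}, so your write-up is, if anything, more complete than what appears here.
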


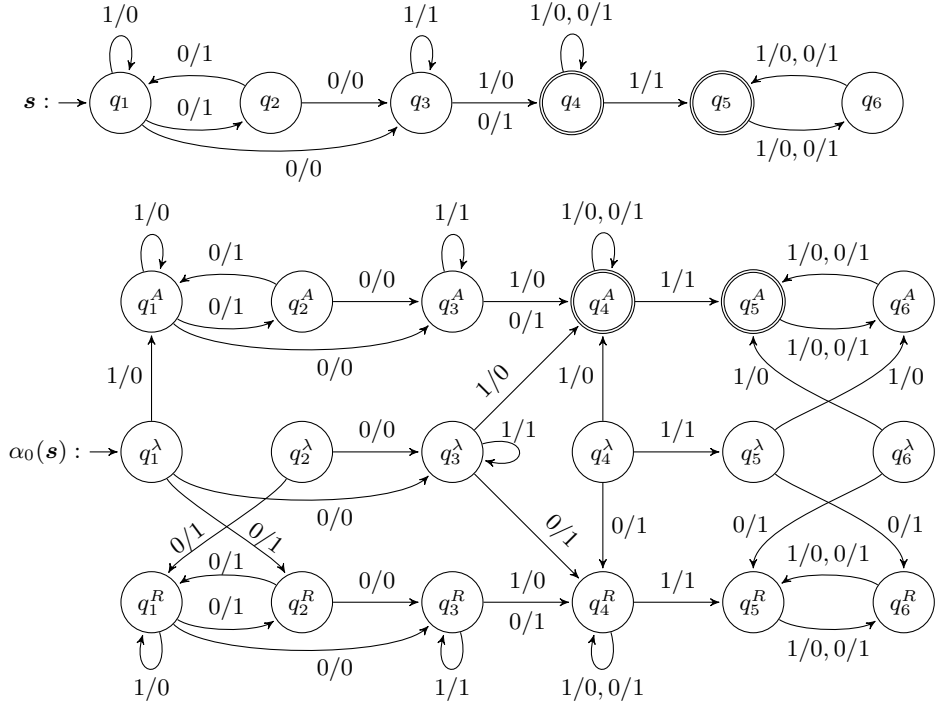
\begin{figure}[hbt]
\begin{center}
\begin{tikzpicture}[>=stealth', initial text={~},shorten >=1pt,auto,node distance=2cm,initial text={$\trs:$}]
\node[state,initial] (1) [label=center: $q_1$] {};
\node[state] [right of=1] (2) [label=center: $q_2$] {};
\node[state] [right of=2] (3) [label=center: $q_3$] {};
\node[state,accepting] [right of=3] (4) [label=center: $q_4$] {};
\node[state,accepting] [right of=4] (5) [label=center: $q_5$] {};
\node[state] [right of=5] (6) [label=center: $q_6$] {};
\draw [->] (1) to [loop above] node {$1/0$} (1);
\draw [to path={
.. controls +(0.6,-0.4) and +(-0.6,-0.4) .. (\tikztotarget) \tikztonodes}]
 [->] (1) to
node[above] {$0/1$} (2);
\drawasxy{1}{3}{$0/0$}{below,pos = 0.6}{-45}{1.5}{225}{1.5}
\draw [to path={
.. controls +(-0.6,0.4) and +(0.6,0.4) .. (\tikztotarget) \tikztonodes}]
 [->] (2) to
node[above] {$0/1$} (1);
\draw [->] (2) to  node{$0/0$} (3);
\draw [->] (3) to [loop above] node {$1/1$} (3);
\draw [->] (3) to  node[above]{$1/0$} node[below]{$0/1$} (4);
\draw [->] (4) to [loop above] node {$1/0,0/1$} (4);
\draw [->] (4) to  node {$1/1$} (5);
\draw [to path={
.. controls +(0.6,-0.4) and +(-0.6,-0.4) .. (\tikztotarget) \tikztonodes}]
 [->] (5) to
node[below] {$1/0,0/1$} (6);
\draw [to path={
.. controls +(-0.6,0.4) and +(0.6,0.4) .. (\tikztotarget) \tikztonodes}]
 [->] (6) to
node[above] {$1/0,0/1$} (5);
\end{tikzpicture}

\begin{tikzpicture}[>=stealth', initial text={~},shorten >=1pt,auto,node distance=2cm,initial text={$\alpha_0(\trs):$}]
\node[state,initial] (1) [label=center: $q_{1}^{\lambda}$] {};
\node[state] [right of=1] (2) [label=center: $q_{2}^{\lambda}$] {};
\node[state] [right of=2] (3) [label=center: $q_{3}^{\lambda}$] {};
\node[state] [right of=3] (4) [label=center: $q_{4}^{\lambda}$] {};
\node[state] [right of=4] (5) [label=center: $q_{5}^{\lambda}$] {};
\node[state] [right of=5] (6) [label=center: $q_{6}^{\lambda}$] {};

\node[state,draw=none] [below of=1] (x) {};

\node[state] [above of=1] (1') [label=center: $q_{1}^{A}$] {};
\node[state] [right of=1'] (2') [label=center: $q_{2}^{A}$] {};
\node[state] [right of=2'] (3') [label=center: $q_{3}^{A}$] {};
\node[state,accepting] [right of=3'] (4') [label=center: $q_{4}^{A}$] {};
\node[state,accepting] [right of=4'] (5') [label=center: $q_{5}^{A}$] {};
\node[state] [right of=5'] (6') [label=center: $q_{6}^{A}$] {};

\node[state] [below of=1] (1D) [label=center: $q_{1}^{R}$] {};
\node[state] [right of=1D] (2D) [label=center: $q_{2}^{R}$] {};
\node[state] [right of=2D] (3D) [label=center: $q_{3}^{R}$] {};
\node[state] [right of=3D] (4D) [label=center: $q_{4}^{R}$] {};
\node[state] [right of=4D] (5D) [label=center: $q_{5}^{R}$] {};
\node[state] [right of=5D] (6D) [label=center: $q_{6}^{R}$] {};
\drawasxy{1}{3}{$0/0$}{below,pos = 0.6}{-45}{1.5}{225}{1.5}
\draw [->] (2) to  node{$0/0$} (3);
\draw [->] (3) to [loop right] node[above] {$1/1$} (3);
\draw [->] (4) to  node {$1/1$} (5);

\draw [->] (1') to [loop above] node {$1/0$} (1');
\draw [to path={
.. controls +(0.6,-0.4) and +(-0.6,-0.4) .. (\tikztotarget) \tikztonodes}]
 [->] (1') to
node[above] {$0/1$} (2');
\drawasxy{1'}{3'}{$0/0$}{below,pos = 0.6}{-45}{1.5}{225}{1.5}
\draw [to path={
.. controls +(-0.6,0.4) and +(0.6,0.4) .. (\tikztotarget) \tikztonodes}]
 [->] (2') to
node[above] {$0/1$} (1');
\draw [->] (2') to  node{$0/0$} (3');
\draw [->] (3') to [loop above] node {$1/1$} (3');
\draw [->] (3') to  node[above]{$1/0$} node[below]{$0/1$} (4');
\draw [->] (4') to [loop above] node {$1/0,0/1$} (4');
\draw [->] (4') to  node {$1/1$} (5');
\draw [to path={
.. controls +(0.6,-0.4) and +(-0.6,-0.4) .. (\tikztotarget) \tikztonodes}]
 [->] (5') to
node[below] {$1/0,0/1$} (6');
\draw [to path={
.. controls +(-0.6,0.4) and +(0.6,0.4) .. (\tikztotarget) \tikztonodes}]
 [->] (6') to
node[above] {$1/0,0/1$} (5');

\draw [->] (1) to  node{$1/0$} (1');
\draw [->] (3) to  node[sloped]{$1/0$} (4');
\draw [->] (4) to  node{$1/0$} (4');
\draw [->] (4) to  node[right]{$0/1$} (4D);

\def\vectora{1}
\drawasxy{5}{6'}{$1/0$}{right,pos = 0.7,yshift=-3pt}{45}{\vectora}{-90}{\vectora}
\drawasxy{6}{5'}{$1/0$}{left,pos = 0.7,yshift=-3pt}{135}{\vectora}{-90}{\vectora}

\draw [->] (1D) to [loop below] node {$1/0$} (1D);
\draw [to path={
.. controls +(0.6,-0.4) and +(-0.6,-0.4) .. (\tikztotarget) \tikztonodes}]
 [->] (1D) to
node[above] {$0/1$} (2D);
\drawasxy{1D}{3D}{$0/0$}{below,pos = 0.6}{-45}{1.5}{225}{1.5}
\draw [to path={
.. controls +(-0.6,0.4) and +(0.6,0.4) .. (\tikztotarget) \tikztonodes}]
 [->] (2D) to
node[above,yshift=-3pt] {$0/1$} (1D);
\draw [->] (2D) to  node{$0/0$} (3D);
\draw [->] (3D) to [loop below] node {$1/1$} (3D);
\draw [->] (3D) to  node[above]{$1/0$} node[below]{$0/1$} (4D);
\draw [->] (4D) to [loop below] node {$1/0,0/1$} (4D);
\draw [->] (4D) to  node {$1/1$} (5D);
\draw [to path={
.. controls +(0.6,-0.4) and +(-0.6,-0.4) .. (\tikztotarget) \tikztonodes}]
 [->] (5D) to
node[below] {$1/0,0/1$} (6D);
\draw [to path={
.. controls +(-0.6,0.4) and +(0.6,0.4) .. (\tikztotarget) \tikztonodes}]
 [->] (6D) to
node[above] {$1/0,0/1$} (5D);

\drawasxy{1}{2D}{$0/1$}{above,pos = 0.7,yshift=-3pt,sloped}{-60}{1}{120}{1}
\drawasxy{2}{1D}{$0/1$}{above,pos = 0.7,yshift=-3pt,sloped}{-120}{1}{60}{1}
\draw [->] (3) to  node[sloped]{$0/1$} (4D);
\def\vectora{1}
\drawasxy{5}{6D}{$0/1$}{right,pos = 0.7,yshift=3pt}{-45}{\vectora}{90}{\vectora}
\drawasxy{6}{5D}{$0/1$}{left,pos = 0.7,yshift=3pt}{-135}{\vectora}{90}{\vectora}
\end{tikzpicture}
\pmsn
\parbox{0.9\textwidth}{\caption{Example of Construction~\ref{con:ltl} applied to transducer $\trs$ to get transducer $\alpha_0(\trs)$.}\label{fig:example2}
}
\end{center}
\end{figure}

\begin{proposition}\label{P:ltl}\NOTE{P:ltl}
Let $\trs$ be any input-altering letter-to-letter  transducer. Let $\trt_1=\alpha_0(\trs)$ and $\trt_2=\big(\alpha_0(\trsinv)\big)^{-1}$, where $\alpha_0(\trs)$ is the transducer produced in Construction~\ref{con:ltl}. The following statements hold true.
\begin{enumerate}
\item $|\alpha_0(\trs)|=\Theta(|\trs|)$.
\item $\{\rel{\trt_1},\rel{\trt_2}\}$ is a rational asymmetric partition of $\rel{\trs}$.
\end{enumerate} 
\end{proposition}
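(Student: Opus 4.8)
The plan is to treat the two parts separately; part~2 will rest almost entirely on Lemma~\ref{L:ltl}.

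For part~1, I would just count. The transducer $\alpha_0(\trs)$ has the three state-copies $Q^\ew,Q^A,Q^R$, hence $3|Q|$ states. Its transition set $T'$ is a disjoint union of $2|T|$ transitions (the $A$-copy and $R$-copy of $T$) together with exactly one extra transition for each $(p,\sigma/\tau,q)\in T$ whose source lies in the $\ew$-copy: the last three lines of the definition of $T'$ cover the cases $\sigma=\tau$, $\sigma>_r\tau$, $\sigma<_r\tau$, which are mutually exclusive and exhaustive because the radix order is a total strict ordering. Hence $|\alpha_0(\trs)|=3|Q|+3|T|=3|\trs|=\Theta(|\trs|)$.

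For part~2, the first move is to read off $\rel{\trt_1}$ from Lemma~\ref{L:ltl}, namely
\[
\rel{\trt_1}=\relb{\alpha_0(\trs)}=\rel\trs\cap\{u/v:u>_rv\}.
\]
Next I would observe that $\trsinv$ is again an input-altering letter-to-letter transducer: its labels are of the form $\tau/\sigma$ with $\sigma,\tau\in\al$, and $\rel{\trsinv}=\rel\trs^{-1}$ is irreflexive since $\rel\trs$ is. So Construction~\ref{con:ltl} and Lemma~\ref{L:ltl} apply to $\trsinv$ as well, giving $\relb{\alpha_0(\trsinv)}=\rel\trs^{-1}\cap\{u/v:u>_rv\}$, and taking inverses yields
\[
\rel{\trt_2}=\big(\relb{\alpha_0(\trsinv)}\big)^{-1}=\rel\trs\cap\{u/v:u<_rv\}.
\]

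To conclude, I would combine the two descriptions. Because $\trs$ is input-altering, $\rel\trs$ is irreflexive, so $u\neq v$ for every $u/v\in\rel\trs$; as the radix order is total, exactly one of $u>_rv$, $u<_rv$ then holds, which makes $\rel{\trt_1}$ and $\rel{\trt_2}$ disjoint with union $\rel\trs$, i.e.\ $\{\rel{\trt_1},\rel{\trt_2}\}$ is a partition of $\rel\trs$. Since $\rel{\trt_1}\sse[>_r]$ and $\rel{\trt_2}\sse[<_r]$, and $[>_r]$ and $[<_r]$ are total asymmetries, any subrelation of them is asymmetric, so both parts are asymmetric; and both are rational, being realized by the transducers $\trt_1$ and $\trt_2=\big(\alpha_0(\trsinv)\big)^{-1}$ (the inverse of a transducer is again a transducer). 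There is no genuinely hard step here: all the substantive content is packed into Lemma~\ref{L:ltl}, whose proof is deferred to the zero-avoiding construction of Section~\ref{sec:k:noncrossing:partition}. The only points needing care are that the hypotheses ``input-altering'' and ``letter-to-letter'' transfer to $\trsinv$, and the routine inverse bookkeeping $\big(R^{-1}\cap[>_r]\big)^{-1}=R\cap[<_r]$.
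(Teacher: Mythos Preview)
Your proof is correct and follows essentially the same approach as the paper, which simply writes ``It follows from the construction and the lemma.'' Your version supplies the details the paper omits (the explicit count $3|Q|+3|T|$, the application of Lemma~\ref{L:ltl} to $\trsinv$, and the verification that the two pieces partition $\rel\trs$), but there is no divergence in method.
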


\begin{proof}
It follows from the construction and the lemma.
\end{proof}

\section{Discrepancies of Computations and  Zero-avoiding Transducers}\label{sec:k:noncrossing}
Here we introduce the concept of a zero-avoiding transducer with some bound $k\in\N_0$, which relates to length discrepancies of the computations of the transducer. We show that the bound $k$ is always less than the number of states of the transducer. We also show that the class of relations realized by zero-avoiding transducers (of any bound) is a proper superset of all left and right synchronous relations. Thus, they also include all recognizable relations and all relations of bounded length discrepancy.

\begin{definition}\label{D:discrep}\NOTE{D:discrep}
Let $u,v\in\alstar$, let \trt be a transducer and let $P=\walk{q_{i-1},x_i/y_i,q_i}_{i=1}^{\ell}\in\Path{\trt}$.
\begin{itemize}
\item[(i)] The \emdef{length discrepancy} of the pair $u/v$ is
the integer $d(u/v)=|u|-|v|$.
\item[(ii)] The \emdef{length discrepancy} of $P$ is
the integer 
$$d(P)= d(x_1x_2\cdots x_\ell / y_1y_2\cdots y_\ell).$$
\item[(iii)] The \emdef{maximum absolute length discrepancy} of $P$ is the integer
$$d_{max}(P)= \max_{Q\in\Px{P}}\left\{| d(Q)| \right\}.$$
\end{itemize}
\end{definition}

\begin{remark}
	We have that $d(\ewp)=0$ and $d_{max}(\ep)=0$. 
	Moreover, 
	\pssi $\blacktriangleright$ if $P_1P_2$ is a path of \trt, then $d(P_1P_2)=d(P_1)+d(P_2)$.
\end{remark}

\begin{definition}\label{D:0avoid}\NOTE{D:0avoid} 
A transducer \trt is called \emdef{zero-avoiding}, if there is an integer $k\ge0$ such that the following condition is satisfied: 
\begin{center}
for any  $P\in\Comput{\trt}$,\;  if $d_{max}(P)>k$ then $d(P)\neq0$. 
\end{center}
In this case, \trt is called zero-avoiding \emdef{with bound $k$}. It is called zero-avoiding with \emdef{minimum bound} $k$, if it is zero-avoiding with bound $k$ and not zero-avoiding with bound $k-1$.\footnote{This is well-defined: if \trt is zero-avoiding with bound $k$ then it is also zero-avoiding with bound $k'$ for all $k'>k$. }
\end{definition}

\begin{remark}
In the above definition of a zero-avoiding transducer, if a computation $P$ has length discrepancy $>k$, or $<-k$, then any continuation of $P$ cannot have zero as its length discrepancy.
\end{remark}

\begin{remark}
Let $\trt = (Q,\Sigma, T, I, F)$ be a transducer. For any path $P$ of \trt there is a unique path $P^{-1}$ of \trtinv whose labels are the inverses of the labels in $P$. Thus, $d(P^{-1})=-d(P)$ and $|d(P^{-1})|=|d(P)|$. This implies that
\pnsi $\blacktriangleright$ if \trt is zero-avoiding with some bound $k$ then also \trtinv is zero-avoiding with  bound $k$.
\end{remark}

\begin{remark}\label{rem:copy:discr}\NOTE{rem:copy:discr}
	Let $\trs = (Q,\Sigma, T, I, F)$ be a transducer and let $\trt=(Q',\Sigma,T',I',F')$ be a  $C$-copy of \trs. Let $P'$ be a computation of \trs, and let $P=\corr{P'}$; then $P$ and $P'$ have exactly the same sequence of labels in their transitions. Thus the following statements hold.
	\begin{enumerate}
	    \item $d_{max}(P')=d_{max}(P)$.
	    \item $d(Q)=d(Q')$ for any prefixes $Q,Q'$ of $P,P'$ of the same length. 
		\item If $\trs$ is zero-avoiding with some bound $k$ then also \trt is zero-avoiding with the same bound $k$.
	\end{enumerate}
\end{remark}

\begin{lemma}\label{L:signed:cycles}\NOTE{L:signed:cycles}
	Let \trt be an $n$-state transducer, for some integer $n>0$.
	\begin{enumerate}
		\item If \trt has a path $P$ with $d(P)\ge n$ (resp., $d(P)\le-n$) then $P$ contains a cycle $B$ with $d(B)>0$ (resp., $d(B)<0$).
        \item If \trt has a computation $P$ with $d_{max}(P)\ge n$ and $d(P)=0$, then $P=BC_1AC_2D$ such that  $C_1,C_2$ are cycles with $d(C_1)d(C_2)<0$.
	\end{enumerate}
\end{lemma}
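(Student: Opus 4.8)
The plan is to prove part 1 by a pigeonhole argument on the length discrepancies of the prefixes of $P$, and then to derive part 2 by cutting $P$ at a prefix of maximum absolute discrepancy and applying part 1 to the two resulting pieces.

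For part 1, assume $d(P)\ge n$; the case $d(P)\le -n$ is symmetric (apply the result to $\trtinv$ and the label-reversed path $P^{-1}$, using $d(P^{-1})=-d(P)$ and the fact that the state sequence, hence the family of cycles, is unchanged). Write $P=\walk{q_{i-1},x_i/y_i,q_i}_{i=1}^{\ell}$ and let $Q_i$ denote the prefix of $P$ of length $i$, so $d(Q_0)=0$, $d(Q_\ell)=d(P)\ge n$, and $|d(Q_i)-d(Q_{i-1})|\le 1$ because each of $x_i,y_i$ lies in $\al\cup\{\ew\}$. For $j=0,1,\dots,n$ let $m_j$ be the least index with $d(Q_{m_j})=j$; these exist because the integer sequence $d(Q_0),\dots,d(Q_\ell)$ starts at $0$, ends at a value $\ge n$, and changes by at most $1$ per step, and I will check that $0=m_0<m_1<\cdots<m_n$ by noting that for $j\ge 1$ the discrepancy at position $m_j-1$ must equal $j-1$. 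Then $q_{m_0},\dots,q_{m_n}$ is a list of $n+1$ states of an $n$-state transducer, so $q_{m_a}=q_{m_b}$ for some $a<b$; the sub-path of $P$ from position $m_a$ to position $m_b$ is a cycle $B$ contained in $P$, and by additivity of $d$ along a path, $d(B)=d(Q_{m_b})-d(Q_{m_a})=b-a>0$.

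For part 2, since $d_{max}(P)\ge n>0=|d(P)|$, the maximum defining $d_{max}(P)$ must be attained at a \emph{proper} prefix $Q$ of $P$, so $P=QP'$ with $P'\neq\ep$. From $d(Q)+d(P')=d(P)=0$ we get $|d(Q)|\ge n$ and $|d(P')|\ge n$ with $d(Q),d(P')$ of opposite signs. Applying part 1 to the path $Q$ and to the path $P'$ yields a cycle $C_1$ contained in $Q$ with $d(C_1)$ of the same sign as $d(Q)$, and a cycle $C_2$ contained in $P'$ with $d(C_2)$ of the same sign as $d(P')$; hence $d(C_1)d(C_2)<0$. Writing $Q=BC_1A'$ and $P'=A''C_2D$ (available since $C_1$ is contained in $Q$ and $C_2$ in $P'$) and setting $A=A'A''$ — a well-formed path, because $A'$ ends where $P'$, and therefore $A''$, begins — gives $P=QP'=BC_1AC_2D$, exactly the claimed form.

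The argument is mostly bookkeeping; the one place needing care is the claim $m_0<m_1<\cdots<m_n$ in part 1, since this is what guarantees that the two pigeonhole-coincident states $q_{m_a}=q_{m_b}$ are met along $P$ in increasing order, so that the extracted cycle genuinely has discrepancy $b-a>0$ rather than $a-b<0$; together with checking, in part 2, that $A=A'A''$ concatenates into a genuine path so that $P=BC_1AC_2D$ is a legitimate decomposition. Both points follow immediately from $P$ being a single path and from $|d|$ changing by at most one per transition.
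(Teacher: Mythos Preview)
Your proof is correct. Part~2 follows the same strategy as the paper---split $P$ into two pieces and apply part~1 to each---with the cosmetic difference that the paper cuts at the \emph{shortest} prefix $P_1$ with $|d(P_1)|=n$, whereas you cut at a prefix realising $d_{max}(P)$; either choice works.

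Part~1 is where your argument genuinely differs. The paper does not use first-hitting times. Instead it observes that $d(P)\ge n$ forces $P$ to have at least $n+1$ states and hence to contain some cycle; it then lets $\cS$ be the set of cycles of $P$ with non-positive discrepancy and removes them via the $P-\cS$ operation defined in Section~\ref{sec:notation}. Since removing such cycles cannot decrease the discrepancy, $d(P-\cS)\ge n$, so $P-\cS$ still contains a cycle $B$, which by construction has $d(B)>0$; the Remark in Section~\ref{sec:notation} then ensures $B$ is also a cycle of $P$. Your pigeonhole on the states $q_{m_0},\dots,q_{m_n}$ at the first times the discrepancy reaches $0,1,\dots,n$ is more direct and fully constructive, and it avoids the iterated cycle-removal machinery; the paper's route is a bit more conceptual but leans on that earlier bookkeeping about $P-\cS$.
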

\begin{proof} 
The proofs of the  statements make use of the path terminology in Section~\ref{sec:notation}. For the first statement, we show only the case $d(P)\ge n$, as the other case is symmetric. As every transition in $P$ changes the discrepancy by at most one, $P$ contains at least $n$ transitions, so there are at least $n+1$ states in $P$, which implies that $P$ contains a cycle. Let $\cS$ be the set of cycles $C$ of $P$ with $d(C)\le0$. Then, $d(P-\cS)\ge d(P)$ and, as $d(P-\cS)\ge n$, $P-\cS$ must contain a cycle $B$, which must have $d(B)>0$. Then, the cycle $B$ is also contained in $P$.
\par 
For the second statement, we split the computation $P$ into two paths $P=P_1P_2$ such that $P_1$ is the shortest prefix of $P$ with $|d(P_1)|=n$. First we consider the case where $d(P_1)=n$. Then, $P_1$ contains a cycle $C_1$ with $d(C_1)>0$. 
Then, $d(P_2)=-n$, which implies that $P_2$ contains a cycle $C_2$ with $d(C_2)<0$. The case where $d(P_1)=-n$ is analogous.
\end{proof}

\begin{proposition}\label{P:mink}\NOTE{P:mink}
Let \trt be an $n$-state transducer, for some integer $n\ge1$. 
If $\trt$ is  zero-avoiding with minimum bound $k$ then $k<n$.
\end{proposition}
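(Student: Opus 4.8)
The plan is to establish the slightly stronger fact that \trt is zero-avoiding with bound $n-1$. Since $k$ is the \emph{minimum} bound and $n-1\ge 0$ (because $n\ge 1$), this immediately gives $k\le n-1<n$. So suppose, towards a contradiction, that \trt is \emph{not} zero-avoiding with bound $n-1$: there is a computation $P$ with $d_{max}(P)\ge n$ and $d(P)=0$. Note $P\neq\ep$, since $d_{max}(\ep)=0<n$, so $P$ begins at an initial state.

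The first real step is to apply Lemma~\ref{L:signed:cycles}(2), which lets me write $P=BC_1AC_2D$ where $C_1$ and $C_2$ are cycles with $d(C_1)d(C_2)<0$. Put $a=|d(C_1)|$, $b=|d(C_2)|$ (both positive) and write $d(C_1)=\varepsilon a$, $d(C_2)=-\varepsilon b$ with $\varepsilon\in\{1,-1\}$. The heart of the argument is then a pumping step: for each $j\ge 0$ set
\[
P_j \;=\; B\,C_1^{\,1+jb}\,A\,C_2^{\,1+ja}\,D.
\]
Since $C_1$ is a cycle based at the last state of $B$ and $C_2$ is a cycle based at the last state of $BC_1A$, each $P_j$ is a genuine path; and as $P_j$ starts with $B$, it is a computation of \trt. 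By additivity of $d$ over concatenation, $d(P_j)=d(P)+jb\,d(C_1)+ja\,d(C_2)=0+jab\varepsilon-jab\varepsilon=0$. On the other hand, the prefix $B\,C_1^{\,1+jb}$ of $P_j$ has length discrepancy $d(B)+(1+jb)\varepsilon a$, whose absolute value is at least $(1+jb)a-|d(B)|$ and therefore tends to infinity with $j$; hence $d_{max}(P_j)\to\infty$. Picking $j$ large enough that $d_{max}(P_j)>k$ produces a computation of \trt with $d_{max}>k$ and $d=0$, contradicting that \trt is zero-avoiding with bound $k$. This contradiction shows \trt is zero-avoiding with bound $n-1$, and $k<n$ follows.

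All the conceptual work sits in Lemma~\ref{L:signed:cycles}: once the discrepancy of a zero-discrepancy computation swings out past $n$, two cycles of opposite sign must be present, and these are exactly what is needed in order to pump. I expect the only points requiring care to be routine: verifying that each $P_j$ is indeed a legitimate computation (this is where the precise cycle structure from the lemma is used) and that the exponents $1+jb$ and $1+ja$ keep the overall discrepancy identically zero while one intermediate prefix escapes any fixed bound.
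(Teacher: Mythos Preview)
Your proof is correct and follows essentially the same route as the paper: both obtain a computation $P$ with $d_{max}(P)\ge n$ and $d(P)=0$, invoke Lemma~\ref{L:signed:cycles}(2) to extract cycles $C_1,C_2$ of opposite discrepancy sign, and then pump these cycles (in balanced amounts so that the total discrepancy stays zero) to force $d_{max}$ past $k$, contradicting the zero-avoiding bound. The only cosmetic difference is the framing---you show directly that $n-1$ is a bound, whereas the paper assumes $k\ge n$ and works from the fact that $k-1$ is not a bound---and your use of a limit argument in $j$ in place of the paper's explicit exponents; neither affects the substance.
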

\begin{proof}
The  statement holds trivially if $k=0$. So suppose $k\ge1$ and assume  for the sake of contradiction that $k\geq n$. As \trt is not zero-avoiding with bound $k-1$, there is a computation $P$ of \trt such that $d_{max}(P)=k$ and $d(P)=0$. Then, Lemma~\ref{L:signed:cycles} implies that $P=BC_1AC_2D$ for some cycles $C_1,C_2$ with $d(C_1)d(C_2)<0$. We can use the cycles $C_1,C_2$ to make a new computation $Q$ of \trt such that $d_{max}(Q)>k$ and $d(Q)=0$. Without loss of generality, assume that $d(C_1)>0$ and $d(C_2)<0$. The required computation of \trt is
    \[
    Q=BC_1\,C_1^{|d(C_2)|k\big(1+|d(B)|\big)}\,AC_2\,C_2^{d(C_1)k\big(1+|d(B)|\big)}\,D.
    \] 
    Then we have that $d(Q)=d(P)+d(C_1)|d(C_2)|k\big(1+|d(B)|\big)+d(C_2)d(C_1)k\big(1+|d(B)|\big)=0$ and
    \begin{align*}
    d_{max}(Q) &\ge d_{max}\Big(BC_1C_1^{|d(C_2)|k\big(1+|d(B)|\big)}\Big) \\ 
       &\ge d(B)+d(C_1)+d(C_1)|d(C_2)|k\big(1+|d(B)|\big)>k
    \end{align*}
    which contradicts the assumption that \trt is zero-avoiding with bound $k$. 
\end{proof}

\begin{proposition}\label{P:zavoidIFF}
Let \trt be an $n$-state transducer, for some integer $n\ge1$. The following statements are equivalent.
\begin{enumerate}
	\item \trt is not zero-avoiding
	\item \trt has a computation $P$ with $d_{max}(P)\ge n$ and $d(P)=0$.
    \item \trt has a computation $P$ of the form $P=BC_1AC_2D$ such that  $C_1,C_2$ are cycles with $d(C_1)d(C_2)<0$.
\end{enumerate}
\end{proposition}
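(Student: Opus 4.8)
The plan is to establish the cycle of implications $(1)\Rightarrow(2)\Rightarrow(3)\Rightarrow(1)$, reusing almost entirely the machinery already in place. For $(1)\Rightarrow(2)$: if $\trt$ is not zero-avoiding then, by Definition~\ref{D:0avoid}, it is in particular not zero-avoiding with the bound $k=n-1$ (this is a legitimate bound since $n\ge1$), so there is a computation $P\in\Comput{\trt}$ with $d_{max}(P)>n-1$ and $d(P)=0$; as $d_{max}(P)$ is an integer, $d_{max}(P)\ge n$, which is exactly statement~(2).

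For $(2)\Rightarrow(3)$: this is nothing more than a restatement of Lemma~\ref{L:signed:cycles}(2), which already guarantees that any computation $P$ with $d_{max}(P)\ge n$ and $d(P)=0$ decomposes as $P=BC_1AC_2D$ with $C_1,C_2$ cycles satisfying $d(C_1)d(C_2)<0$. So nothing new is needed here.

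For $(3)\Rightarrow(1)$: I would adapt the pumping argument from the proof of Proposition~\ref{P:mink}. Starting from $P=BC_1AC_2D$ with, after possibly swapping the roles of $C_1$ and $C_2$, $d(C_1)>0>d(C_2)$, and given an arbitrary target bound $k\ge0$, I insert $m\,|d(C_2)|$ extra copies of $C_1$ immediately after the displayed $C_1$ and $m\,d(C_1)$ extra copies of $C_2$ immediately after the displayed $C_2$, for a multiplicity $m$ to be chosen. The two inserted blocks contribute $m\,|d(C_2)|\,d(C_1)+m\,d(C_1)\,d(C_2)=0$ to the total discrepancy, so $d(Q)=d(P)=0$; and along the prefix ending just after the inflated $C_1$-block the discrepancy is at least $d(B)+d(C_1)+m\,|d(C_2)|\,d(C_1)\ge -|d(B)|+d(C_1)+m\,|d(C_2)|\,d(C_1)$, which exceeds $k$ once $m$ is large enough (any $m>(k+|d(B)|)/(|d(C_1)d(C_2)|)$ works). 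Since $Q$ starts at the same (initial) state as $P$, it is a genuine computation, so $\trt$ is not zero-avoiding with bound $k$; as $k$ was arbitrary, $\trt$ is not zero-avoiding, closing the cycle.

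I do not expect a genuine obstacle: the content is already distilled in Lemma~\ref{L:signed:cycles} and in the proof of Proposition~\ref{P:mink}. The only points needing a little care are the off-by-one in $(1)\Rightarrow(2)$ (converting $d_{max}(P)>n-1$ to $d_{max}(P)\ge n$) and, in $(3)\Rightarrow(1)$, checking that the chosen multiplicity $m$ really forces the maximum absolute discrepancy past $k$ irrespective of the signs of $d(B)$ and $d(A)$ — which the crude bound $d(B)\ge-|d(B)|$ handles.
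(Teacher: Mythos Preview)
Your cycle of implications and the steps $(1)\Rightarrow(2)$ and $(2)\Rightarrow(3)$ match the paper exactly. The gap is in $(3)\Rightarrow(1)$: statement~(3) does \emph{not} assert $d(P)=0$, so your conclusion ``$d(Q)=d(P)=0$'' is unfounded. Balanced pumping of $C_1$ against $C_2$ only gives $d(Q)=d(P)$, and nothing in~(3) controls $d(P)$; without $d(Q)=0$ you have not contradicted zero-avoidance for any bound. (In Proposition~\ref{P:mink} one \emph{did} have $d(P)=0$ by hypothesis, which is precisely why the argument went through verbatim there.)

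The fix needs one extra observation: the discrepancy changes by at most~$1$ across each transition. Discard $D$ and form the computation $R=BC_1^{N}AC_2^{M}$. Choose $N$ with $d(BC_1^{N})=d(B)+N\,d(C_1)>k$, and then $M$ large enough that $d(R)<0$. Since $d(BC_1^{N})>0$ while $d(R)<0$, some prefix $Q$ of $R$ having $BC_1^{N}$ as a prefix satisfies $d(Q)=0$, and then $d_{max}(Q)\ge d(BC_1^{N})>k$. The paper argues this direction as $(3)\Rightarrow(2)$, pointing back to the pumping in the proof of Proposition~\ref{P:mink}; it too leaves the passage to $d(Q)=0$ implicit, but at least does not claim $d(P)=0$.
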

\begin{proof}
	That the first statement is equivalent to the second one follows logically  from Definition~\ref{D:0avoid} and the above lemma. Now we show that the second and third statements are equivalent. The second statement implies the third one, by Lemma~\ref{L:signed:cycles}. Now suppose that the third statement holds, that is, there is a computation $P=BC_1AC_2D$ of \trt such that $d(C_1)d(C_2)<0$. As in the proof of the above lemma, we can use the cycles $C_1,C_2$ to make a computation $Q$ of \trt with $d_{max}(Q)\ge n$ and $d(Q)=0$.
\end{proof}

\if\DRAFT1\pssn\sk{QUESTION: }{The above proposition should allow us to decide whether a given transducer \trt is zero-avoiding. Is that correct? Can we do it in polynomial time? I think Juraj had looked at this question.}
\fi

\pssn\textbf{Relating left (right) synchronous and zero-avoiding relations.}
A natural question that arises is how zero-avoiding relations are related to the well-known left (or right) synchronous relations. A relation $R$ is called \emdef{left synchronous} if the relation $\overrightarrow R$ can be realized by a letter-to-letter transducer over the alphabet $\Sigma\cup\{\#\}$ with $\#\notin\Sigma$ \cite{Cart:09,Chof:2006}. Here we use the notation 
$$\overrightarrow{u/v}=(u/v\#^{|u|-|v|}) \text{ if } |u|\ge|v|; \text{ and } \overrightarrow{u/v}=(u\#^{|v|-|u|}/v) \text{ if } |u|<|v|.$$
Then, $\overrightarrow R=\{\overrightarrow{u/v}:u/v\in R\}$. An equivalent definition is given in \cite{Sak:2009}: $R$ is left synchronous if it is a finite union of relations, each of the form $S(A\times\{\ew\})$ or $S(\{\ew\}\times A)$, where $A$ is a regular language and $S$ is realized by a letter-to-letter transducer. The concept of a right synchronous relation is symmetric: we can define it either via 
$$\overleftarrow{u/v}=(u/\#^{|u|-|v|}v) \text{ if } |u|\ge|v|; \text{ and } \overleftarrow{u/v}=(\#^{|v|-|u|}u/v) \text{ if } |u|<|v|,$$
or via finite unions of relations of the form $(A\times\{\ew\})S$ or $(\{\ew\}\times A)S$.

\begin{proposition}\label{P:lsyncVSzavoid}\NOTE{P:lsyncVSzavoid}
	The classes of left synchronous relations and right synchronous relations are proper subsets of the class of zero-avoiding relations with bound 0.
\end{proposition}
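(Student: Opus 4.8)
The plan is to prove the two inclusions (left synchronous $\subseteq$ zero-avoiding with bound $0$, and right synchronous $\subseteq$ zero-avoiding with bound $0$), and then exhibit a single relation that is zero-avoiding with bound $0$ but neither left nor right synchronous, to witness properness. I would handle the left synchronous case in detail and remark that the right synchronous case is symmetric (via the $\trt\mapsto\trtinv$ and reversal symmetries already noted after Definition~\ref{D:0avoid} and Remark~\ref{rem:copy:discr}).

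For the containment, let $R$ be left synchronous. Using the characterization from \cite{Sak:2009}, $R$ is a finite union of relations of the form $S(A\times\{\ew\})$ or $S(\{\ew\}\times A)$, where $S$ is realized by a letter-to-letter transducer and $A$ is regular. The key observation is that a letter-to-letter transducer $\trs_0$ has $d(P)=0$ for every computation $P$ (each transition contributes $0$ to the discrepancy), hence $d_{max}(P)=0$ too; so $\trs_0$ is trivially zero-avoiding with bound $0$. Appending a block that reads input only (the $A\times\{\ew\}$ factor) can only increase $d$ monotonically from $0$; appending a block that reads output only (the $\{\ew\}\times A$ factor) decreases it monotonically. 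In either case, once the synchronous prefix is consumed, the discrepancy moves strictly away from $0$ and never returns. Concretely, I would build a transducer for $S(A\times\{\ew\})$ by concatenating (in the sense of the $\trt\trs$ product from Section~\ref{sec:notation}) the letter-to-letter transducer for $S$ with a one-state automaton for $A$ reading $\sigma/\ew$; any computation has the shape $P=P_1P_2$ with $d(P_1)=0$ and $d(P_2)=|P_2|\ge 0$, so if $d_{max}(P)>0$ then $d(P)>0$, giving bound $0$. The union over finitely many such pieces is handled by the $\trt\lor\trs$ construction: a computation of $\trt_1\lor\trt_2$ is a computation of one of the summands, so the bound-$0$ property is preserved by disjoint union of transducers (this needs the mild care that $\lor$ be realized as a disjoint union rather than sharing states, which one may assume). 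The right synchronous case is identical with the roles of input and output swapped.

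For properness I would give an explicit example: take $R_0=\{\,0^m1^n/1^n0^m : m,n\ge 0\,\}$, realized by a transducer that first reads $0/1$ transitions (discrepancy stays $0$), then $1/0$ transitions (discrepancy stays $0$) — in fact this is already letter-to-letter, so I should instead pick something with genuinely unbounded discrepancy. A cleaner choice is $R_1=\{\,a^m/b^{2m} : m\ge 0\,\}\cup\{\,b^{2m}/a^m:m\ge 0\,\}$ over $\Sigma\supseteq\{a,b\}$; a transducer realizing it reads $a/bb$ in a loop (or $bb/a$ in the other branch), so $d$ goes $0,-2,-4,\dots$ (resp. $0,2,4,\dots$) and once $|d|>0$ it never returns to $0$ — hence zero-avoiding with bound $0$. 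But $R_1$ is not left synchronous: if it were, $\overrightarrow{R_1}$ would be letter-to-letter recognizable over $\Sigma\cup\{\#\}$, and a standard pumping argument on $\overrightarrow{a^m/b^{2m}}=a^m\#^m/b^{2m}$ shows the padding $\#^m$ is synchronized against the suffix $b^m$ of the output, forcing a pumpable pair of cycles whose iteration breaks the $a^m\leftrightarrow b^{2m}$ ratio; the same argument on the $\overleftarrow{\cdot}$ padding rules out right synchronous. I expect this non-membership pumping argument to be the main obstacle — it requires setting up the synchronization normal form carefully and being precise about where the $\#$'s land — whereas the two containments are routine once the letter-to-letter $\Rightarrow$ $d\equiv 0$ observation is in hand.
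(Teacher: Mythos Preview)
Your containment argument is essentially the paper's: both use the Sakarovitch decomposition $R=\bigcup S_iC_i$, build the transducer $\bigvee\trt_i\trs_i$, and observe that each computation lives in one summand where the discrepancy is $0$ throughout the letter-to-letter part and then moves monotonically away from~$0$.

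For properness the paper takes a different, slightly cleaner route. Its witness is $(00/0)^*$, realized by a two-state transducer alternating the labels $0/0$ and $0/\ew$; the discrepancy sequence along any computation is $0,0,1,1,2,2,\ldots$, so bound~$0$ holds. The non-left-synchronous argument is a one-line projection rather than a pumping argument: if $(00/0)^*$ were left synchronous then $\overrightarrow R=\{0^{2i}/0^i\#^i:i\in\N_0\}$ would be realized by a letter-to-letter transducer, hence its range $\{0^i\#^i:i\in\N_0\}$ would be regular, contradiction; the right-synchronous case is symmetric. Your example $R_1$ also works, but two small repairs are needed: since labels in this paper lie in $(\al\cup\{\ew\})\times(\al\cup\{\ew\})$, the ``$a/bb$'' loop must be split into two transitions, and then the discrepancy sequence is $0,0,-1,-1,-2,\ldots$ (change $-1$ per full cycle, not $-2$); and the ``pumping on the padded form'' sketch you flag as the main obstacle can be replaced wholesale by the same projection trick --- for $R_1$ one gets $\prj2(\overrightarrow{R_1})\supseteq\{b^{2m}:m\ge0\}$ paired against inputs $a^m\#^m$, so the pair language $\{(a,b)^m(\#,b)^m\}$ would have to be regular. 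The paper's single-symbol witness avoids even this much bookkeeping.
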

\begin{proof}
Consider any left synchronous relation $R=S_1C_1\cup\cdots\cup S_mC_m$ such that each $S_i$ is realized by a letter-to-letter transducer $\trt_i$ and each $C_i$ is of the form $(A_i\times\{\ew\})$ or $(\{\ew\}\times A_i)$, where $A_i$ is a regular language. Then, each $C_i$ is realized by a transducer $\trs_i$ whose transition labels are either all in $\Sigma\times\{\ew\}$, or all in $\{\ew\}\times\Sigma$. The zero-avoiding transducer realizing $R$ is the transducer $\trt=\bigvee_{i=1}^m\trt_i\trs_i$. It is zero-avoiding with bound 0 because each computation $P$ of \trt is in one component $\trt_i\trs_i$ and, once a prefix $Q$ of $P$ has $d(Q)>0$ (resp., $d(Q)<0$) then also $d(P)>0$ (resp., $d(P)<0$). Similarly, one has that every right synchronous relation is also zero-avoiding with bound 0.
\pnsi
The rational relation $R=(00/0)^*$ is zero-avoiding but not left synchronous. 
To see that $R=(00/0)^*$ is zero-avoiding we use the transducer in Fig.~\ref{fig:aa,a} realizing $R$ and satisfying the zero-avoiding condition with bound 0.
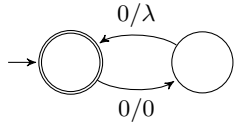
\begin{figure}[hbt]
\begin{center}
\begin{tikzpicture}[>=stealth', initial text={~},shorten >=1pt,auto,node distance=1.75cm,initial text={}]
\node[state,initial,accepting] (2) {};
\node[state] [right of=2] (3) {};
\draw [to path={
.. controls +(0.6,-0.4) and +(-0.6,-0.4) .. (\tikztotarget) \tikztonodes}]
 [->] (2) to
node[below] {$0/0$} (3);
\draw [to path={
.. controls +(-0.6,0.4) and +(0.6,0.4) .. (\tikztotarget) \tikztonodes}]
 [->] (3) to
node[above] {$0/\ew$} (2);
\end{tikzpicture}
\end{center}
\caption{Transducer realizing the relation $(00/0)^*$.}
\label{fig:aa,a}
\end{figure}
%
To show that $R$ is not left synchronous, we use contradiction. Assume that it is, then the relation $\overrightarrow R=\{(0^i0^i/0^i\#^i):i\in\N_0\}$ would be rational, which implies that the language $\{0^i\#^i:i\in\N_0\}$ would be regular; a contradiction. Similarly, we have that $R$ is not right synchronous.
\end{proof}

\pnsn\textbf{Zero-avoiding relations are not closed under intersection.} To see this consider the zero-avoiding relations
\[
(0/0)^*(\ew/1)^*\quad\text{ and }\quad (0/\ew)^*(1/0)^*.
\]
Their intersection is $\{0^i1^i/0^i:i\in\N_0\}$, which is non-rational (if it were rational then the language $\{0^i1^i:i\in\N_0\}$ would be regular).

\section{Asymmetric Partition of Zero-avoiding Transducers}\label{sec:k:noncrossing:partition}
We present next a solution to the asymmetric partition for any relation realized  by some zero-avoiding transducer $\trs$---see Construction~\ref{con:zero} and Theorem~\ref{th:qs}. The required asymmetric relation is realized by a transducer $\alpha(\trs)$ which is a $C$-copy of $\trs$, where $C$ is shown in~\eqref{eq:copies} further below. In fact $\rel{\alpha(\trs)}=(\rel\trs\cap[>_r])$; thus, $u/v\in\rel{\alpha(\trs)}$ implies $u>_r v$. The set of states of $\alpha(\trs)$ is $Q'=\cup_{c\in C}Q^c$. The reason why all these copies of $Q$ are needed is to know at any point during a computation $P'$ of $\alpha(\trs)$ whether $d_{max}(P')$ has exceeded $k$, where $k$ is either the known bound of zero-avoidance, or the number of states of $\trs$. 
\pmsn
\textbf{Meaning of states of $\alpha(\trs)$ in Construction~\ref{con:zero}.} A state $q^c$ of $\alpha(\trs)$ has the following meaning. Let $P'$ be any computation of $\alpha(\trs)$ ending with $q^c$ and having some label $\win/\wout$. Then, state $q^c$ specifies which one of the following mutually exclusive facts about $P'$ holds. 
\begin{itemize}
\item $q^c=q^\ew$ means: $\win=\wout$.
\item  $q^c=q^{+u}$ means: $\win=\wout u$, for some word $u$ with $1\le|u|\le k$, so $\win>_r\wout$. 
\item $q^c=q^{-u}$ means: $\wout=\win u$, for some word $u$ with $1\le|u|\le k$, so $\win<_r\wout$. 
\item $q^c=q^{A\ell}$ means: $\win=x\sigma y$, $\wout= x\tau z$, $\sigma>_r\tau$,  $\ell=|y|-|z|=d(P')$, and $-k\le\ell\le k$. Note that the $A$ in $q^{A\ell}$ is a reminder of $\sigma>_r\tau$ and indicates that $P'$ could be the prefix of an \underline{A}ccepting computation $Q'$ having $d(Q')\ge0$, in which case $\win'>_r\wout'$ where $\win'/\wout'$ is the label of $Q'$.
\item $q^c=q^{R\ell}$ means: $\win=x\sigma y$, $\wout =x\tau z$, $\sigma<_r\tau$, $\ell=|y|-|z|=d(P')$, and $-k\le\ell\le k$. Note that the $R$ in $q^{R\ell}$ is a reminder of $\sigma<_r\tau$ and indicates that $P'$ could be the prefix of a \underline{R}ejecting computation $Q'$ having $d(Q')\le0$, in which case $\win'<_r\wout'$ where $\win'/\wout'$ is the label of $Q'$.
\item $q^c=q^{A}$ means: $d_{max}(P')>k$ and $d(P')=|\win|-|\wout|>0$. 
\item $q^c=q^{R}$ means: $d_{max}(P')>k$ and $d(P')=|\win|-|\wout|<0$.
\end{itemize}
\textbf{Final states in Construction~\ref{con:zero}.}
Based on the meaning of the states and the requirement that the label $\win/\wout$ of an accepting computation $P'$ of $\alpha(\trs)$ satisfies $\win>_r\wout$, the final states of $\alpha(\trs)$ are shown in~\eqref{eq:finals} further below. Let $f$ be any final state of $\trs$. State $f^A$ of $\alpha(\trs)$ is final because, if $P'$ ends in $f^A$, we have $d(P')>0$, which implies $\win>_r\wout$. On the other hand, state $f^R$ is not final because any computation $P'$ of $\alpha(\trs)$ ending in $f^A$  has $d(P')<0$, which implies $\win<_r\wout$. State $f^{R\ell}$, with $\ell>0$, is final because any computation $P'$ of $\alpha(\trs)$ ending in $f^{R\ell}$  has $|\win|-|\wout|=\ell>0$, so $\win>_r\wout$. On the other hand, state $f^{R\ell}$, with $\ell\le0$, is not final because any computation $P'$ of $\alpha(\trs)$ ending in $f^{R\ell}$  has $|\win|-|\wout|=\ell\le0$, and $\win<_r\wout$.

\begin{example}
The transducer $\alpha(\trs)$  consists of several modified copies of $\trs$ (see Fig.~\ref{fig:example1a}) such that, for any computation $P$ of $\trs$ with label $\win/\wout$ there is at least one corresponding computation of $\alpha(\trs)$ with the same label $\win/\wout$ which goes through copies of  the same states appearing in $P$. The initial states of $\alpha(\trs)$ are in the copy $Q^\lambda$, where any computation involving only states in $Q^\lambda$ has equal input and output labels. 
For a transition $e=(p,1/\ew,q)$ of $\trs$, we have that  $e'=(p^{\ew},1/\ew,q^{+1})$ is one of the transitions of $\alpha(\trs)$ corresponding to $e$, where the transition $e'$ starts at the copy $Q^\ew$ of $Q$ and goes to the copy $Q^{+1}$ of $Q$ (see Fig.~\ref{fig:example1b}). In a computation of $\alpha(\trs)$ that ends in the copy $Q^{+1}$, the input label is of the form $x1$ and the output label is of the form $x$. Then, Fig.~\ref{fig:example1c} shows all possible transitions from state $p^{+1}$ to other states of $\alpha(\trs)$, which could be in the same or different copies of $Q$. 	\qed
\end{example}

\begin{figure}[t!]
\centering
\begin{subfigure}[b]{0.5\textwidth}
\centering
\begin{tikzpicture}[>=stealth', initial text={~},shorten >=1pt,auto,node distance=26pt,initial text={~}, scale=0.75, every node/.style={scale=0.75}]
\def\Xext{32pt}
\def\Yext{12pt}
\node[state,draw=none,inner sep=7pt,minimum size=0pt] (Q+11a) [label=center: $Q^{+11}$] {};
\node[state,draw=none,inner sep=7pt,minimum size=0pt] [right of=Q+11a] (Q+11b) [label=center: $\circledcirc$] {};
\draw[rounded corners=0pt] ($(Q+11a)+(-\Xext,-\Yext)$)
-- ($(Q+11b)+(\Xext,-\Yext)$)
-- ($(Q+11b)+(\Xext,\Yext)$)
-- ($(Q+11a)+(-\Xext,\Yext)$)
-- cycle;

\drawJurBlock{Q+11a}{Q+10a}{$Q^{+10}$}{Q+10b}{$\circledcirc$}{\Xext}{\Yext}
\drawJurBlock{Q+10a}{Q+01a}{$Q^{+01}$}{Q+01b}{$\circledcirc$}{\Xext}{\Yext}
\drawJurBlock{Q+01a}{Q+00a}{$Q^{+00}$}{Q+00b}{$\circledcirc$}{\Xext}{\Yext}
\drawJurBlock{Q+00a}{Q+1a}{$Q^{+1}$}{Q+1b}{$\circledcirc$}{\Xext}{\Yext}
\drawJurBlock{Q+1a}{Q+0a}{$Q^{+0}$}{Q+0b}{$\circledcirc$}{\Xext}{\Yext}

\drawJurBlock{Q+0a}{Qla}{$Q^{\lambda}$}{Qlb}{}{\Xext}{\Yext}

\node[state,draw=none,inner sep=7pt,minimum size=0pt,initial] [left of=Qlb][xshift=0pt] (Xinit) [label=center: ] {};

\drawJurBlock{Qla}{Q-0a}{$Q^{-0}$}{Q-0b}{}{\Xext}{\Yext}
\drawJurBlock{Q-0a}{Q-1a}{$Q^{-1}$}{Q-1b}{}{\Xext}{\Yext}
\drawJurBlock{Q-1a}{Q-00a}{$Q^{-00}$}{Q-00b}{}{\Xext}{\Yext}
\drawJurBlock{Q-00a}{Q-01a}{$Q^{-01}$}{Q-01b}{}{\Xext}{\Yext}
\drawJurBlock{Q-01a}{Q-10a}{$Q^{-10}$}{Q-10b}{}{\Xext}{\Yext}
\drawJurBlock{Q-10a}{Q-11a}{$Q^{-11}$}{Q-11b}{}{\Xext}{\Yext}

\node[state,draw=none,inner sep=7pt,minimum size=0pt] [right of=Q+11a, xshift=110pt] (QA2a) [label=center: $Q^{A2}$] {} ;
\node[state,draw=none,inner sep=7pt,minimum size=0pt] [right of=QA2a] (QA2b) [label=center: $\circledcirc$] {};
\draw ($(QA2a)+(-\Xext,-\Yext)$)
-- ($(QA2b)+(\Xext,-\Yext)$)
-- ($(QA2b)+(\Xext,\Yext)$)
-- ($(QA2a)+(-\Xext,\Yext)$)
-- cycle;

\drawJurBlock{QA2a}{QA1a}{$Q^{A1}$}{QA1b}{$\circledcirc$}{\Xext}{\Yext}
\drawJurBlock{QA1a}{QA0a}{$Q^{A0}$}{QA0b}{$\circledcirc$}{\Xext}{\Yext}
\drawJurBlock{QA0a}{QA-1a}{$Q^{A-1}$}{QA-1b}{$ $}{\Xext}{\Yext}
\drawJurBlock{QA-1a}{QA-2a}{$Q^{A-2}$}{QA-2b}{$ $}{\Xext}{\Yext}
\drawJurBlock{QA-2a}{QR2a}{$Q^{R2}$}{QR2b}{$\circledcirc$}{\Xext}{\Yext}
\drawJurBlock{QR2a}{QR1a}{$Q^{R1}$}{QR1b}{$\circledcirc$}{\Xext}{\Yext}
\drawJurBlock{QR1a}{QR0a}{$Q^{R0}$}{QR0b}{}{\Xext}{\Yext}
\drawJurBlock{QR0a}{QR-1a}{$Q^{R-1}$}{QR-1b}{}{\Xext}{\Yext}
\drawJurBlock{QR-1a}{QR-2a}{$Q^{R-2}$}{QR-2b}{}{\Xext}{\Yext}

\node[state,draw=none,inner sep=7pt,minimum size=0pt] [below of=QR-2a, yshift=-26pt] (QAa) [label=center: $Q^{A}$] {} ;
\node[state,draw=none,inner sep=7pt,minimum size=0pt] [right of=QAa] (QAb) [label=center: $\circledcirc$] {};
\draw ($(QAa)+(-\Xext,-\Yext)$)
-- ($(QAb)+(\Xext,-\Yext)$)
-- ($(QAb)+(\Xext,\Yext)$)
-- ($(QAa)+(-\Xext,\Yext)$)
-- cycle;

\drawJurBlock{QAa}{QRa}{$Q^{R}$}{QRb}{}{\Xext}{\Yext}

\draw [densely dashed] ($(QRa)+(-\Xext -23,-\Yext)$)
-- ($(QA2a)+(-\Xext -23,\Yext +1)$);

\draw [densely dashed] ($(QAa)+(-\Xext -23,\Yext +13)$)
-- ($(QAb)+(\Xext,\Yext +13)$);

\draw [->][yshift=0pt] (Q-11b) to node[below,pos=0.37] [yshift=2pt]{$\lambda/\tau$} (QRa);
\draw [->][yshift=0pt] (Q-10b) to node[below,pos=0.37] [yshift=2pt]{$\lambda/\tau$} (QRa);
\draw [->][yshift=0pt] (Q-01b) to node[below,pos=0.37] [yshift=2pt]{$\lambda/\tau$} (QRa);
\draw [->][yshift=0pt] (Q-00b) to node[below,pos=0.37] [yshift=2pt]{$\lambda/\tau$} (QRa);
\draw [->][yshift=0] (Qlb) to node[pos=0.25] [yshift=-2pt]{$0/1$} (QR0a);
\drawasxy{Q-0b}{Q-1b}{$0/1$}{below,pos = 0.85}{0}{1}{0}{1}
\drawasxy{Qlb}{Q+1b}{$1/\lambda$}{left,pos = 0.5}{15}{1.5}{-15}{1.5}
\drawasxy{Q+00b}{Q+00b}{$0/0$}{below,pos = 0.5,yshift=-3pt,xshift=-5pt}{+30}{1}{0}{1}
\drawasxy{QR-1b}{QR-1b}{$\sigma/\tau$}{below,pos = 0.5,yshift=-3pt,xshift=-5pt}{+30}{1}{0}{1}
\draw [->][yshift=0pt] (Q+11b) to node[below,pos=0.37] [yshift=2pt]{$\sigma/0$} (QA2a);
\drawasxy{QA1b}{QA2b}{$\sigma/\lambda$}{below,pos = 0.15}{0}{1}{0}{1}
\drawasxy{QA-1b}{QA-2b}{$\lambda/\tau$}{above,pos = 0.15,yshift=-1pt}{0}{1}{0}{1}
\drawasxy{QR2b}{QAb}{$\sigma/\lambda$}{left,pos = 0.8}{-45}{2.6}{45}{2.6}
\draw [->][yshift=0] (Q+11b) to node[pos=0.28] [yshift=-10pt]{$\sigma/\lambda$} (QAa);
\end{tikzpicture}

\caption{Overview of copies of states of transducer $\trs$ in transducer $\alpha(\trs)$. The figure also depicts some chosen transitions involving $\sigma,\tau\in \Sigma$. The symbol $\circledcirc$ represents that the copy contains some final states;  initial states are only in copy $Q^\lambda$.}
\label{fig:example1a}

\end{subfigure}
~
\begin{subfigure}[b]{0.4\textwidth}
\centering
\begin{tikzpicture}[>=stealth', initial text={~},shorten
>=1pt,auto,node distance=26pt,initial text={~}, scale=0.75, every node/.style={scale=0.75}]
\def\Xext{32pt}
\def\Yext{13pt}
\node[state,inner sep=7pt,minimum size=22pt] (qil) [label=center: $p^{\lambda}$] {};
\node[state,draw=none,inner sep=7pt,minimum size=0pt] [right of=qil] (qjl) [label=center: $ $] {};
\draw[rounded corners=0pt] ($(qil)+(-\Xext,-\Yext)$)
-- ($(qjl)+(\Xext,-\Yext)$)
-- ($(qjl)+(\Xext,\Yext)$)
-- ($(qil)+(-\Xext,\Yext)$)
-- cycle;

\node[state,draw=none,inner sep=7pt,minimum size=0pt] [above of=qil,yshift=20pt] (qi+1) [label=center: $ $] {};
\node[state,inner sep=7pt,minimum size=22pt] [right of=qi+1] (qj+1) [label=center: $q^{{+}1}$] {};
\draw[rounded corners=0pt] ($(qi+1)+(-\Xext,-\Yext)$)
-- ($(qj+1)+(\Xext,-\Yext)$)
-- ($(qj+1)+(\Xext,\Yext)$)
-- ($(qi+1)+(-\Xext,\Yext)$)
-- cycle;

\node[state,minimum size=28pt,draw=none] [left of=qi+1,xshift=-20pt] (qx+1) [label=center: $Q^{+1}$] {};

\node[state,minimum size=28pt,draw=none] [left of=qil,xshift=-20pt] (qxl) [label=center: $Q^{\lambda}$] {};

\draw [->] (qil) to node[pos=0.2,xshift=5pt]{$1/\lambda$} (qj+1);

\end{tikzpicture}
\caption{If the transducer $\trs$ contains the transition $(p,1/\lambda,q)$ then the transducer $\alpha(\trs)$ contains the transition $(p^{\lambda},1/\lambda,q^{+1})$.}
\label{fig:example1b}

\begin{tikzpicture}[>=stealth', initial text={~},shorten >=1pt,auto,node distance=42pt,initial text={~},scale=0.75, every node/.style={scale=0.75}]
\node[state,minimum size=28pt] (1) [label=center: $p^{+1}$] {};
\node[state,minimum size=28pt] [right of=1] (2) [label=center: $q^{+1}$] {};
\node[state,draw=none,minimum size=28pt] [above of=1] (x1) [label=center: ] {};
\node[state,minimum size=28pt] [above of=x1] (3) [label=center: $q^{+10}$] {};
\node[state,minimum size=28pt] [above of=3] (4) [label=center: $q^{+11}$] {};
\node[state,minimum size=28pt] [below of=1] (5) [label=center: $q^{+0}$] {};
\node[state,minimum size=28pt] [below of=5] (6) [label=center: $q^{\lambda}$] {};
\node[state,minimum size=28pt] [above of =2,right of=2] (8) [label=center: $q^{A0}$] {};
\node[state,minimum size=28pt] [above of=8] (7) [label=center: $q^{A1}$] {};
\draw [->] (1) to node{$1/1$} (2);
\draw [->] (1) to node[xshift=2pt,yshift=-2pt]{$0/\lambda$} (3);
\drawasxy{1}{4}{$1/\lambda$}{left,pos = 0.79}{150}{1.5}{210}{1.5}
\draw [->] (1) to node[xshift=1pt,yshift=0pt,right]{$0/1$} (5);
\drawasxy{1}{6}{$\lambda/1$}{right,pos = 0.77}{210}{1.5}{150}{1.5}
\drawasxy{1}{7}{$1/0,0/0$}{above,pos = 0.90,xshift=-15pt}{75}{1.5}{180}{1.5}
\drawasxy{1}{8}{$\lambda/0$}{above,pos = 0.75}{60}{1.5}{180}{1.5}
\end{tikzpicture}
\caption{Sketch of transitions in transducer $\alpha(\trs)$ with different labels, assuming that there are transitions with every possible label from state $p$ to state $q$ of the transducer $\trs$.}
\label{fig:example1c}
\end{subfigure}
\pmsn
\parbox{0.9\textwidth}{\caption{Sketch of examples of a transducer $\alpha(\trs)$ which is the result of Construction~\ref{con:zero} on some transducer $\trs$. The examples use $\al=\{0,1\}$ and $k=2$. Notice that $0$ in $Q^{+0}$ is the string $0$ and $0$ in  $Q^{A0}$ is the number zero (length discrepancy).}\label{fig:example1}
}
\end{figure}
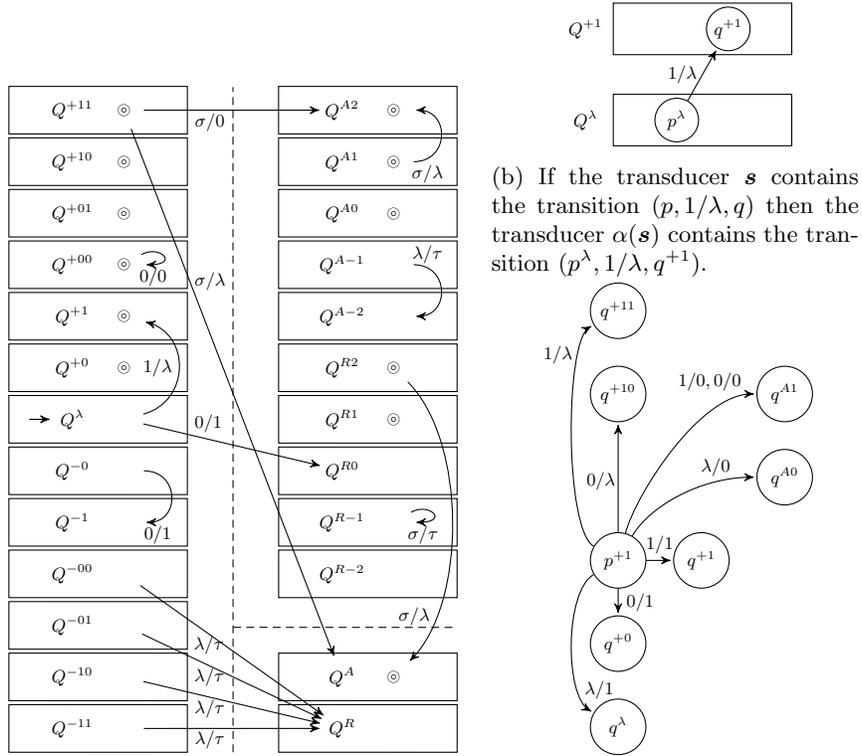

A \emdef{$\ewp$-free transducer} is a transducer that has no label $\ewp$. Using tools from automata theory, we have that every transducer realizes the same relation as one of a $\ewp$-free transducer.

\begin{construction}\label{con:zero}\NOTE{con:zero}
Let $\trs = (Q,\Sigma, T, I, F)$ be a $\ewp$-free and zero-avoiding transducer with some bound $k$. 
The transducer $\alpha(\trs)=(Q',\Sigma,T',I',F')$ is a $C$-copy of $\trs$ as follows. 
\begin{equation}\label{eq:copies}\NOTE{eq:copies}
C=\{\ew,A,R\}\;\cup\;\{+u,-u\mid u\in\alstar,\,1\le|u|\le k \}\;\cup\;\{A\ell,R\ell\mid \ell\in\Z,\,-k\le\ell\le k \}.
\end{equation}
We have $Q'=\cup_{c\in C}Q^c$, $I'=I^\ew$,
\begin{equation}\label{eq:finals}\NOTE{eq:finals}
F' = F^A\cup F^{A0}\cup\big(\cup_{1\le|u|\le k}F^{+u}\big)\cup\big(\cup_{\ell=1}^{k}(F^{A\ell}\cup F^{R\ell})\big)
\end{equation}
The set $T'$ of transitions is defined next. More specifically, for each transition $(p,x/y,q)\in T$, with $x/y\in\{\sigma/\tau,\sigma/\ew,\ew/\tau\mid \sigma,\tau\in \al\}$, we define the set $\Corr{p,x/y,q}$. For each state $p^c\in Q'$ the transition $(p^c,x/y,q^d)$ is in $\Corr{p,x/y,q}$, where $q^d$ depends on $p^c$ and $x/y$ as follows.
\pmsi
If $p^c=p^{\ew}$:
\pnsi\quad if $x/y=\sigma/\sigma$ then $q^d=q^{\ew}$;
\pnsi\quad if $x/y=\sigma/\tau$ and $\sigma>_r\tau$ then $q^d=q^{A0}$;
\pnsi\quad if $x/y=\sigma/\tau$ and $\sigma<_r\tau$ then $q^d=q^{R0}$;
\pnsi\quad if $x/y=\sigma/\ew$, then $q^d=q^{+\sigma}$ if $k>0$, and $q^d=q^A$ if $k=0$;
\pnsi\quad if $x/y=\ew/\tau$, then $q^d=q^{-\tau}$ if $k>0$, and $q^d=q^R$ if $k=0$.
\pmsi
If $p^c=p^{+u}$:
\pnsi\quad if $x/y=\sigma/\ew$ and $|u|<k$ then $q^d=q^{+u\sigma}$;
\pnsi\quad if $x/y=\sigma/\ew$ and $|u|=k$ then $q^d=q^{A}$;
\pnsi\quad if $x/y=\ew/\tau$ and $u[0]=\tau$ then $q^d=q^{u[1..]}$;
\pnsi\quad if $x/y=\ew/\tau$ and $u[0]>_r\tau$ then $q^d=q^{A\ell}$ where $\ell=|u[1..]|$;
\pnsi\quad if $x/y=\ew/\tau$ and $u[0]<_r\tau$ then $q^d=q^{R\ell}$ where $\ell=|u[1..]|$;
\pnsi\quad if $x/y=\sigma/\tau$ and $u[0]=\tau$ then $q^d=q^{+u[1..]\sigma}$;
\pnsi\quad if $x/y=\sigma/\tau$ and $u[0]>_r\tau$ then $q^d=q^{A\ell}$ where $\ell=|u|$;
\pnsi\quad if $x/y=\sigma/\tau$ and $u[0]<_r\tau$ then $q^d=q^{R\ell}$ where $\ell=|u|$.
\pmsi
If $p^c=p^{-u}$:
\pnsi\quad if $x/y=\sigma/\ew$ and $u[0]=\sigma$ then $q^d=q^{-u[1..]}$;
\pnsi\quad if $x/y=\sigma/\ew$ and $u[0]>_r\sigma$ then $q^d=q^{R\ell}$ where $\ell=|u[1..]|$;
\pnsi\quad if $x/y=\sigma/\ew$ and $u[0]<_r\sigma$ then $q^d=q^{A\ell}$ where $\ell=|u[1..]|$;
\pnsi\quad if $x/y=\ew/\tau$ and $|u|<k$ then $q^d=q^{-u\tau}$;
\pnsi\quad if $x/y=\ew/\tau$ and $|u|=k$ then $q^d=q^{R}$;
\pnsi\quad if $x/y=\sigma/\tau$ and $u[0]=\sigma$ then $q^d=q^{-u[1..]\tau}$;
\pnsi\quad if $x/y=\sigma/\tau$ and $u[0]>_r\sigma$ then $q^d=q^{R\ell}$ where $\ell=|u|$;
\pnsi\quad if $x/y=\sigma/\tau$ and $u[0]<_r\sigma$ then $q^d=q^{A\ell}$ where $\ell=|u|$.
\pmsi
If $p^c=p^{X\ell}$ with $X\in\{A,R\}$:
\pnsi\quad if $x/y=\sigma/\ew$ and $\ell<k$ then $q^d=q^{X(\ell+1)}$;
\pnsi\quad if $x/y=\sigma/\ew$ and $\ell=k$ then $q^d=q^{A}$;
\pnsi\quad if $x/y=\ew/\tau$ and $\ell>-k$ then $q^d=q^{X(\ell-1)}$;
\pnsi\quad if $x/y=\ew/\tau$ and $\ell=-k$ then $q^d=q^{R}$;
\pnsi\quad if $x/y=\sigma/\tau$ then $q^d=q^{X\ell}$.
\pmsi
If $p^c\in\{p^A,p^R\}$: $q^d = q^c$.
\hfill$\Box$
\end{construction}

\begin{remark}
	The transitions of $\alpha(\trs)$ have been defined so that the meaning of the states is preserved. Note that any transition of $\alpha(\trs)$ with source state $p^A$ has a destination state of the form $q^A$. This is because both \trs and $\alpha(\trs)$ are zero-avoiding with bound $k$, so any computation $P'$ of $\alpha(\trs)$ ending at $p^A$ has $d_{max}(P')>k$ and $d(P')>0$ and, moreover, any computation $Q'$ of $\alpha(\trs)$ having $P'$ as prefix will be such that $d_{max}(Q')>k$ and  $d(Q')>0$. For similar reasons, any transition of $\alpha(\trs)$  with source state $p^R$ has a destination state of the form~$q^R$.
\end{remark}

\begin{lemma}\label{lem:qs}\NOTE{lem:qs}
	For every computation $P$ of $\trs$, the set $\Corr{P}$ is nonempty. 
\end{lemma}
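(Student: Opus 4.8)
The plan is to lift the computation $P$ of $\trs$, transition by transition, to a path of $\alpha(\trs)$. If $P=\ep$ then $\Corr{P}=\{\ep\}$ by Definition~\ref{D:corr:path} and we are done, so assume $P=\walk{q_{i-1},x_i/y_i,q_i}_{i=1}^{\ell}$ with $\ell\ge1$ and $q_0\in I$. I would build a path $P'=\walk{e_i'}_{i=1}^{\ell}$ of $\alpha(\trs)$ with $e_i'\in\Corr{e_i}$ by induction on $i$, beginning at the state $q_0^\ew$ (this is the natural choice, since the initial states of $\alpha(\trs)$ lie in the copy $Q^\ew$, so the resulting $P'$ will even be a computation). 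At each step, once a copy $q_{i-1}^{c}$ of $q_{i-1}$ has been reached, I take as $e_i'$ the transition that Construction~\ref{con:zero} attaches to the pair $(q_{i-1}^{c},\,x_i/y_i)$; its destination is a copy $q_i^{d}$ of $q_i$, which is precisely the state from which the next step starts.

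The heart of the proof is the claim that for every state $p^c\in Q'$ and every transition $e=(p,x/y,q)\in T$ there exists a copy $q^d$ of $q$ with $(p^c,x/y,q^d)\in\Corr{e}$; that is, out of each copy of the source of $e$ there is a transition in $\Corr{e}$. I would establish this by checking that the case analysis defining $T'$ in Construction~\ref{con:zero} is exhaustive: since $\trs$ is $\ewp$-free, the label $x/y$ is one of $\sigma/\sigma$, $\sigma/\tau$ with $\sigma>_r\tau$, $\sigma/\tau$ with $\sigma<_r\tau$, $\sigma/\ew$, or $\ew/\tau$ (using that $>_r$ is total on $\al$), while the copy label $c$ is one of $\ew$; $+u$ or $-u$ with $1\le|u|\le k$; $A\ell$ or $R\ell$ with $-k\le\ell\le k$; $A$; or $R$. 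For each of the resulting combinations exactly one subcase of the construction applies and names a destination copy label. In the same pass I would verify that this label always lies in the set $C$ of~\eqref{eq:copies}; for instance $+u\sigma$ appears only when $|u|<k$, so $1\le|u\sigma|\le k$, and $X(\ell+1)$ only when $\ell<k$, so $-k\le\ell+1\le k$, and analogously for $u[1..]$, $X(\ell-1)$, $u[1..]\sigma$, and so on. Hence $q^d$ is genuinely a state of $\alpha(\trs)$.

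Granting the claim, the induction runs without incident. Applying the claim with $p^c=q_0^\ew$ and $e=e_1$ yields $e_1'$; applying it repeatedly with $p^c$ equal to the destination of the transition chosen at the previous step yields $e_2',\dots,e_\ell'$. Consecutive transitions share the intermediate copy state by construction, so $P'=\walk{e_i'}_{i=1}^{\ell}$ is a path of $\alpha(\trs)$, and $e_i'\in\Corr{e_i}$ for every $i$; therefore $P'\in\Corr{P}$ by Definition~\ref{D:corr:path}, and $\Corr{P}\neq\es$.

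The one nontrivial piece of work, and the step where an error would be easiest to make, is the exhaustiveness-and-legality check for the big case split of Construction~\ref{con:zero}: confirming that every pair $(c,x/y)$ falls under exactly one clause and that no clause names a copy label outside~\eqref{eq:copies}. This is purely mechanical — it rests only on the totality of $>_r$ on $\al$ and on the ranges $1\le|u|\le k$ and $-k\le\ell\le k$ — but it is lengthy, and a single omitted subcase would invalidate the lemma.
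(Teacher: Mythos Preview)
Your proposal is correct and follows essentially the same approach as the paper: induction on the length of the computation, lifting $P$ transition by transition starting at $q_0^\ew$, using that Construction~\ref{con:zero} provides an outgoing transition from every state $p^c$ for every label $x/y$ of a transition leaving $p$ in $\trs$. Your version is in fact more explicit than the paper's about isolating the key claim (exhaustiveness of the case split and legality of the resulting copy labels), which the paper's proof simply asserts.
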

\begin{proof}
	The statement follows from the  definition of the set of transitions $T'$ of $\alpha(\trs)$. More specifically, let $P=\walk{q_{i-1},x_i/y_i,q_i}_{i=1}^\ell$, and for any $m$ with $1\le m\le\ell$, let $P_m$ be the prefix of $P$ consisting of the first $m$ transitions. Using induction, we show that $\Corr{P_m}\neq\es$ for all $m$. For $m=1$, we have that $\Corr{P_1}$ is nonempty, as $q_0^\ew$ is a defined state of $\alpha(\trs)$ and also a  transition  $(q_0^\ew,x_1/y_1,q_1^d)$ of $\alpha(\trs)$ corresponding to $(q_0,x_1/y_1,q_1)$ is defined in Construction~\ref{con:zero} for all cases of $x_1/y_1$. Now for some $m<\ell$ suppose that there is $P_m'\in\Corr{P_m}$, so we have
	\[
	P'_m=\walk{q_{i-1}^{c_{i-1}},x_i/y_i,q_i^{c_i}}_{i=1}^m.
	\]
	As $q_m^{c_m}$ is defined and there is  a transition $(q_m^{c_m},x_{m+1}/y_{m+1},q_{m+1}^{c_{m+1}})$ defined in Construction~\ref{con:zero} for all cases of $x_{m+1}/y_{m+1}$, we have that also the path
	\[
	\walk{q_{i-1}^{c_{i-1}},x_i/y_i,q_i^{c_i}}_{i=1}^{m+1}
	\]
	is in $\Corr{P_{m+1}}$.
\end{proof}

\begin{lemma}\label{L:alpha}
	Let \trs be a $\ewp$-free and zero-avoiding transducer. The transducer $\alpha(\trs)$ in Construction~\ref{con:zero} is such that
	\[
	\relb{\alpha(\trs)}=\rel\trs\cap\{u/v:u>_rv\}.
	\]
\end{lemma}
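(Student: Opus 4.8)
The plan is to prove the two inclusions $\relb{\alpha(\trs)}\subseteq\rel\trs\cap\{u/v:u>_rv\}$ and $\rel\trs\cap\{u/v:u>_rv\}\subseteq\relb{\alpha(\trs)}$ separately, with the first being essentially a soundness statement and the second a completeness statement.

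For the inclusion $\relb{\alpha(\trs)}\subseteq\rel\trs\cap\{u/v:u>_rv\}$, I would first invoke Lemma~\ref{L:faithful} to get $\relb{\alpha(\trs)}\subseteq\rel\trs$, so it remains to show that every $u/v\in\relb{\alpha(\trs)}$ satisfies $u>_rv$. The key is an invariant: for every computation $P'$ of $\alpha(\trs)$ ending in a state $q^c$, the ``meaning of states'' assertion listed before Construction~\ref{con:zero} holds for the label $\win/\wout$ of $P'$. I would prove this invariant by induction on the length of $P'$, checking that each of the transition cases in Construction~\ref{con:zero} preserves the stated meaning (using Remark~\ref{rem:copy:discr} to relate discrepancies of $P'$ and $\corr{P'}$, and the fact that both \trs and $\alpha(\trs)$ are zero-avoiding with bound $k$, by Remark~\ref{rem:copy:discr}(3), to justify that once the copy $A$ or $R$ is entered it is never left and the sign of $d$ is fixed). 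Once the invariant is established, I inspect the set $F'$ of final states in~\eqref{eq:finals}: for each type of final state ($F^A$, $F^{A0}$, $F^{+u}$, $F^{A\ell}$ with $\ell\ge1$, $F^{R\ell}$ with $\ell\ge1$), the meaning of that state directly forces $\win>_r\wout$. Hence any accepting computation has label $u/v$ with $u>_rv$.

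For the reverse inclusion, let $u/v\in\rel\trs$ with $u>_rv$. Then there is an accepting computation $P$ of \trs with $\lbl P=u/v$. By Lemma~\ref{lem:qs}, $\Corr P$ is nonempty; but I need a specific corresponding computation of $\alpha(\trs)$ that is accepting. The natural choice is the \emph{canonical} lift: starting from $q_0^\ew$, the destination copy of each successive transition is completely determined by Construction~\ref{con:zero} (the construction is deterministic in the copy index given the source copy and the label), so there is a unique $P'\in\Corr P$ starting in $Q^\ew$, and by Lemma~\ref{lem:qs} it is well-defined all the way to length $\ell$. It remains to check that $P'$ ends in a final state of $\alpha(\trs)$. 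Since $P$ is accepting, its last state $q_\ell$ is in $F$; by the invariant above, the last state $q_\ell^{c_\ell}$ of $P'$ has $c_\ell$ recording the correct ``meaning'' of the label $u/v$. Now I case on $c_\ell$: because $u>_rv$, the meaning of $q_\ell^{c_\ell}$ cannot be of the form $q^{R0}$, $q^{-u'}$, $q^R$, or $q^{R\ell}$ with $\ell\le0$; and it cannot be $q^\ew$ either, since $u>_rv$ with the radix order would require $|u|=|v|$ and then $u=v$ — impossible unless $u\ne v$, and indeed equality of labels is excluded because $u>_rv$ is irreflexive. So $c_\ell\in\{A,A0\}\cup\{+u':1\le|u'|\le k\}\cup\{A\ell:\ell\ge1\}\cup\{R\ell:\ell\ge1\}$, and in every such case $q_\ell^{c_\ell}\in F'$ by~\eqref{eq:finals}. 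Therefore $P'\in\aComput{\alpha(\trs)}$ with $\lbl{P'}=u/v$, giving $u/v\in\relb{\alpha(\trs)}$.

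The main obstacle I anticipate is the bookkeeping in the induction for the state-meaning invariant: there are many transition cases, and the ones that matter most are the transitions \emph{out of} the ``delayed'' copies $p^{+u}$ and $p^{-u}$ (where the buffered word $u$ interacts with a new symbol $\sigma$ or $\tau$, possibly flushing into $Q^{A\ell}$ or $Q^{R\ell}$ and fixing the radix comparison once and for all) and the transitions into $p^A$ and $p^R$, where one must argue — using zero-avoidance with bound $k$ — that the discrepancy can never return to $0$, so the $A$/$R$ distinction and the strict radix inequality are genuinely permanent. I would isolate the radix-order fact I repeatedly use, namely that if $\win=x\sigma y$ and $\wout=x\tau z$ with $\sigma>_r\tau$ and $|y|\ge|z|$ then $\win>_r\wout$ (and symmetrically for $<_r$), and that if $\win=\wout u$ with $|u|\ge1$ then $\win>_r\wout$, as a small preliminary observation about the radix ordering; with that in hand the individual cases are routine.
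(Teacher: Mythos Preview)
Your approach matches the paper's proof: both directions use exactly the same ingredients (Lemma~\ref{L:faithful} plus the state-meaning invariant for soundness; Lemma~\ref{lem:qs} followed by a case analysis on the terminal copy label for completeness). One small omission in your reverse-inclusion case analysis: you forgot to rule out $c_\ell=A\ell$ with $-k\le\ell<0$, which is also non-final and is excluded by the same length argument (the meaning of $q^{A\ell}$ gives $|\win|-|\wout|=\ell<0$, hence $\win<_r\wout$, contradicting $u>_rv$).
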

\begin{proof}
	First we show that, for every $\win/\wout\in\rel{\alpha(\trs)}$, it is $\win>_r\wout$ and $\win/\wout\in\rel\trs$. That $\win/\wout\in\rel\trs$ follows from Lemma~\ref{L:faithful}. The  part  $\win>_r\wout$ follows from the meaning of the states of $\alpha(\trs)$ and the definition of its final states. More specifically, if $\win/\wout\in\rel{\alpha(\trs)}$ and $P'\in\aComput{\alpha(\trs)}$  with $\lbl{P'}=\win/\wout$ one shows that $\win>_r\wout$ by considering the possible cases for the last state of $P$. For example, if $P$ ends  at a state $f^{R\ell}$, with $\ell>0$, then this means that $\win=x\sigma y$, $\wout=x\tau z$, $\sigma<_r\tau$ and $|y|-|z|=\ell$, which implies $|\win|>|\wout|$; hence $\win>_r\wout$.
	\par Now we show that, if $\win/\wout\in\rel\trs$ and $\win>_r\wout$, then $\win/\wout\in\rel{\alpha(\trs)}$. Let $P$ be any accepting computation of $\trs$ with label $\win/\wout$. By Lemma~\ref{lem:qs}, there is a computation $P'\in\Corr P$ of $\alpha(\trs)$ that ends at some state $f^c$ with $f\in F$ and $c\in C$. If $f^c\in F'$ then $P'$ is accepting and, therefore, the label $\win/\wout$ of $P'$ is in $\rel{\alpha(\trs)}$. If $f^c\notin F'$ then 
\[
c\in\{\ew,R,R0\}\;\cup\;\{-u\mid u\in\alstar,\,1\le|u|\le k \}\;\cup\;\{A\ell,R\ell\mid \ell\in\Z,\,-k\le \ell<0 \}.
\]
Then, we get a contradiction by showing that $\win<_r \wout$ for any $c$ as above. For example, if $c=A\ell$ or $c=R\ell$, with $\ell<0$, the meaning of the states implies that $|\win|<|\wout|$.
\end{proof}

The below theorem solves the rational asymmetric partition problem for every irreflexive relation realized by some zero-avoiding transducer.

\begin{theorem}\label{th:qs}\NOTE{th:qs}
Let \trs be any input-altering and zero-avoiding transducer with bound $k\in\N_0$. Let  $\trt_1=\alpha(\trs)$ and let $\trt_2=\big(\alpha(\trsinv)\big)^{-1}$, where $\alpha(\trs)$ is the transducer produced in Construction~\ref{con:zero}. The following statements hold true.
\begin{enumerate}
\item $|\alpha(\trs)|=\Theta(|\trs||\al|^{k})$.
\item $\{\rel{\trt_1},\rel{\trt_2}\}$ is a rational asymmetric partition of $\rel{\trs}$.
\end{enumerate} 
\end{theorem}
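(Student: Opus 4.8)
The plan is to derive statement~2 from Lemma~\ref{L:alpha} together with the elementary remark in Section~\ref{sec:theproblem} on the asymmetric partition induced by a total asymmetry, and to derive statement~1 by an exact count of the copy-label set $C$ of~\eqref{eq:copies}. Throughout we take \trs to be $\ewp$-free, as Construction~\ref{con:zero} requires; if the given transducer is not $\ewp$-free we first replace it by an equivalent $\ewp$-free transducer, which remains input-altering (irreflexivity is a property of the realized relation) and remains zero-avoiding with the \emph{same} bound $k$ (contracting $\ewp$-transitions changes no length discrepancy along a computation and introduces no new intermediate discrepancy values, so $d_{max}$ is unchanged along corresponding computations). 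Note also that $\trsinv$ is then $\ewp$-free and, by the remark that inversion preserves zero-avoidance and its bound, is zero-avoiding with bound $k$, so Construction~\ref{con:zero} applies to $\trsinv$ as well, with the \emph{same} set $C$, since $C$ depends only on $\al$ and $k$.

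For statement~2 I would first pin down the two relations. Lemma~\ref{L:alpha} gives $\rel{\trt_1}=\relb{\alpha(\trs)}=\rel\trs\cap\{u/v:u>_rv\}$. Applying Lemma~\ref{L:alpha} to $\trsinv$ gives $\relb{\alpha(\trsinv)}=\rel{\trsinv}\cap\{u/v:u>_rv\}=\rel\trs^{-1}\cap\{u/v:u>_rv\}$; taking inverses and using that inversion distributes over intersection, that $(\rel\trs^{-1})^{-1}=\rel\trs$, and that $\{u/v:u>_rv\}^{-1}=\{u/v:u<_rv\}$, one obtains $\rel{\trt_2}=\relb{\alpha(\trsinv)}^{-1}=\rel\trs\cap\{u/v:u<_rv\}$. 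Since \trs is input-altering, $\rel\trs$ is irreflexive, and $[>_r]$ is a total strict ordering, hence a total asymmetry; so by the remark in Section~\ref{sec:theproblem}, $\{\rel\trs\cap[>_r],\,\rel\trs\cap[<_r]\}$ is an asymmetric partition of $\rel\trs$. Because $\rel{\trt_1}$ and $\rel{\trt_2}$ are realized by transducers they are rational, so this is a rational asymmetric partition of $\rel\trs$.

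For statement~1 I would count exactly. Construction~\ref{con:zero} introduces all $|C|$ copies of $Q$, so $\alpha(\trs)$ has exactly $|C|\,|Q|$ states. For every transition $e=(p,x/y,q)$ of \trs and every $c\in C$ the construction specifies exactly one destination state for the source $p^c$ under label $x/y$ (inspection shows every case of $c$ and $x/y\in\{\sigma/\tau,\sigma/\ew,\ew/\tau\}$ is handled), and distinct values of $c$ yield distinct transitions, so $|\Corr e|=|C|$; since the sets $\Corr e$, $e\in T$, are pairwise disjoint and their union is the transition set of $\alpha(\trs)$ (Remark~\ref{rem:tr:copy}), $\alpha(\trs)$ has exactly $|C|\,|T|$ transitions. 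Hence $|\alpha(\trs)|=|C|\,(|Q|+|T|)=|C|\,|\trs|$. Finally, from~\eqref{eq:copies}, $|C|=3+(4k+2)+2\sum_{i=1}^{k}|\al|^i$, which is $\Theta(|\al|^{k})$ for $|\al|\ge2$ (the interesting case; for $k=0$ it is the constant $5=\Theta(|\al|^{0})$). Therefore $|\alpha(\trs)|=\Theta(|\trs|\,|\al|^{k})$.

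The substantive content of the theorem has already been carried by Construction~\ref{con:zero} and Lemma~\ref{L:alpha}, so what remains is bookkeeping. The step I expect to need the most care is the reduction to the $\ewp$-free case --- specifically, verifying that contracting $\ewp$-transitions preserves zero-avoidance \emph{with the same bound} $k$, rather than merely for some bound --- and, to a lesser extent, keeping the inverses straight for $\trt_2$ so that $\rel{\trt_2}$ comes out as $\rel\trs\cap[<_r]$ and not as something else.
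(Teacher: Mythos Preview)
Your proposal is correct and follows essentially the same route as the paper: both derive $\rel{\trt_2}=\rel\trs\cap[<_r]$ from Lemma~\ref{L:alpha} applied to $\trsinv$ and then invoke the asymmetric-partition remark of Section~\ref{sec:theproblem}, and both obtain the size bound by counting $|C|$ from~\eqref{eq:copies}. Your version is slightly more explicit---you spell out $|\Corr e|=|C|$ to get $|\alpha(\trs)|=|C|\,|\trs|$ exactly, and you flag the $\ewp$-free reduction and its interaction with the bound~$k$, which the paper leaves implicit---but the argument is the same.
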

\begin{proof}
For the \underline{first} statement, equation~\eqref{eq:copies} implies that $\alpha(\trs)$ has 
\[
|C|=3+4k+2\frac{|\al|^{k+1}-1}{|\al|-1}
\]
copies of the states of \trs, which is of magnitude $O(|\al|^{k})$. 
If \trs has $n$ states then $\alpha(\trs)$ has $O(n|\al|^{k})$ states, and if \trs has $m$ transitions then $\alpha(\trs)$ has $O(m|\al|^{k})$ transitions.
\par 
The \underline{second} statement follows when we note that  
\[
\rel{\trt_2}=\rel\trs\cap\{u/v:u<_rv\}
\]
which is a consequence of Lemma~\ref{L:alpha} using standard logical arguments on the sets involved; and that  $\big\{\rel\trs\cap\{u/v:u>_rv\},\;\rel\trs\cap\{u/v:u<_rv\}\big\}$ is a partition of $\rel\trs$ because \trs is input-altering.
\end{proof}

Now we have the following consequence of Theorem~\ref{th:qs} and Proposition~\ref{P:lsyncVSzavoid}.

\begin{corollary}\label{cor:lsyn}\NOTE{cor:lsyn}
	Every left synchronous and every right synchronous irreflexive rational relation has a rational asymmetric partition. 
\end{corollary}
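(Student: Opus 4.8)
The plan is to combine Proposition~\ref{P:lsyncVSzavoid} with Theorem~\ref{th:qs}; no new construction is needed. Let $R$ be an irreflexive rational relation that is left synchronous (the right synchronous case is entirely analogous, using the symmetric half of Proposition~\ref{P:lsyncVSzavoid}). First I would invoke Proposition~\ref{P:lsyncVSzavoid} to obtain a transducer $\trs$ that is zero-avoiding with bound $0$ and satisfies $\rel{\trs}=R$. Since $R$ is irreflexive and $\rel{\trs}=R$, the transducer $\trs$ is input-altering by definition.

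Next I would apply Theorem~\ref{th:qs} to $\trs$ with $k=0$: it yields transducers $\trt_1=\alpha(\trs)$ and $\trt_2=\big(\alpha(\trsinv)\big)^{-1}$ such that $\{\rel{\trt_1},\rel{\trt_2}\}$ is a rational asymmetric partition of $\rel{\trs}=R$. That is exactly the conclusion. For the right synchronous case the only change is that the transducer supplied by Proposition~\ref{P:lsyncVSzavoid} is assembled from right synchronous components; it is still zero-avoiding with bound $0$ and still input-altering because $R$ is irreflexive, so the same two-line argument applies verbatim.

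I do not expect a genuine obstacle here, since all the work lives in the two cited results. The one point that deserves a sentence of care is that Theorem~\ref{th:qs} is stated for an arbitrary input-altering zero-avoiding transducer, whereas Construction~\ref{con:zero} presupposes a $\ewp$-free transducer; so implicitly one first replaces $\trs$ by an equivalent $\ewp$-free transducer (as remarked just before Construction~\ref{con:zero}), and one should note that this replacement keeps the transducer zero-avoiding with some finite bound, which is all Theorem~\ref{th:qs} requires. Apart from this bookkeeping, the corollary is immediate.
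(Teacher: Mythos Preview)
Your proposal is correct and is exactly the approach the paper takes: the paper states the corollary as an immediate consequence of Theorem~\ref{th:qs} and Proposition~\ref{P:lsyncVSzavoid}, and you have simply spelled out the two-step argument. Your extra remark about passing to a $\ewp$-free transducer is harmless bookkeeping (removing $\ewp$-transitions does not affect length discrepancies, so zero-avoidance is preserved), and in fact the transducer built in the proof of Proposition~\ref{P:lsyncVSzavoid} can already be taken $\ewp$-free.
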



\section{An Unsolved Case and a Variation of the Problem}\label{sec:variations}
Recall that Theorem~\ref{th:qs} solves the rational asymmetric partition problem for any irreflexive relation realized by some zero-avoiding transducer. The main open question is the following.
\pmsi \textit{\textbf{Open Question.} Does there exist any rational irreflexive relation that has no rational asymmetric partition?} 
\pmsn
We also offer a more specific open question. Consider the rational symmetric relation: $R=R_1\cup R_1^{-1}\cup R_2\cup R_2^{-1}$, where
\begin{align*}\label{eq:open}\NOTE{eq:open}
	R_1 &= \{(0^a 1^i 0^j 1^b/0^i 1^c 0^j 1^d)\mid a,b,c,d,i,j\in \N\} \quad\text{and}\quad   \\
	R_2 &= \{(0^a 1^i 0^j 1^b,0^c 1^i 0^d 1^j)\mid a,b,c,d,i,j\in \N\}.
\end{align*}
$R$ is rational because $R_1$ and $R_2$ are rational---see for example the
transducer $\trt_1$ realizing $R_1$ in Fig.~\ref{fig:exampleHard}. The more specific question is the following.
\pmsi \textit{Does there exist a rational asymmetric relation $A$ such that $A\cup A^{-1}=R$ ?}
\pmsn
We  note the following facts about $R$:
\pssi$\blacktriangleright$ $R_1\cap R_2=\{(0^a 1^i 0^j 1^b,0^i 1^i 0^j 1^j)\mid a,b,i,j\in \N\}$ is not rational.
\pssi$\blacktriangleright$ Also non rational are: $R_1\cap R_1^{-1}$, $R_1\cap R_2^{-1}$, $R_2\cap R_1^{-1}$, $R_1\cap R_1^{-1}$, $R_1^{-1}\cap R_2^{-1}$, $R_1\cap R_1^{-1}$.
\pssi$\blacktriangleright$ Also non rational is the intersection $R_1\cap R_1^{-1}\cap R_2\cap R_2^{-1}$.
\pssn The fact that $R$ is realized by some transducer that is not zero-avoiding does not imply that $R$ is realized by no zero-avoiding transducer.
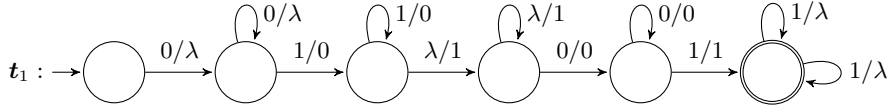
\begin{figure}[hbt]
\begin{center}
\begin{tikzpicture}[>=stealth', initial text={~},shorten >=1pt,auto,node distance=1.75cm,initial text={$\trt_1:$}]
\node[state,initial] (0) {};
\node[state] [right of=0] (1) {};
\node[state] [right of=1] (2) {};
\node[state] [right of=2] (3) {};
\node[state] [right of=3] (4) {};
\node[state,accepting] [right of=4] (5) {};
\draw [->] (0) to  node{$0/\lambda$} (1);
\draw [->] (1) to [loop above] node [near end,right] {$0/\lambda$} (1);
\draw [->] (1) to  node{$1/0$} (2);
\draw [->] (2) to [loop above] node [near end,right] {$1/0$} (2);
\draw [->] (2) to  node{$\lambda/1$} (3);
\draw [->] (3) to [loop above] node [near end,right] {$\lambda/1$} (3);
\draw [->] (3) to  node{$0/0$} (4);
\draw [->] (4) to [loop above] node [near end,right] {$0/0$} (4);
\draw [->] (4) to  node{$1/1$} (5);
\draw [->] (5) to [loop above] node [near end,right] {$1/\lambda$} (5);
\draw [->] (5) to [loop right] node [right] {$1/\lambda$} (5);
\end{tikzpicture}
\end{center}
\caption{Transducer $\trt_1$ realizing the relation $R_1=\{(0^a 1^i 0^j 1^b,0^i 1^c 0^j 1^d)\mid
a,b,c,d,i,j\in \N\}$.}
\label{fig:exampleHard}
\end{figure}

The proof of Theorem~\ref{th:qs} implies that every zero-avoiding and irreflexive-and-symmetric rational relation $S$ has a rational partition according to the radix order $[>_r]$; that is, $\{S\cap[>_r],S\cap[<_r]\}$ is a rational partition of $S$. A question that arises here is whether there are examples of irreflexive-and-symmetric rational  $S$  for which at least one of $S\cap[>_r],S\cap[<_r]$ is not rational. This question can be generalized slightly by using any total asymmetry $[>]$ in place of $[>_r]$. The question would be answered if we find an asymmetric rational $A$ such that at least one of $S\cap[>],S\cap[<]$ is not rational, where $S=A\cup A^{-1}$. 
\begin{description}
  \item[The Rational Non-Partition Problem for a Fixed Asymmetry.] Let $[>]$ be a fixed total asymmetry. Is there an asymmetric rational relation $A$ such that at least one of  $(A\cup A^{-1})\cap[>]$ and $(A\cup A^{-1})\cap[<]$  is not  rational? If the answer is yes, then $A$ is called a \emdef{rational non-partition witness for} $[>]$; else, $A$ is called a \emdef{rational partition witness for} $[>]$.
\end{description}
Next we show that the answer to the above problem is positive for certain fixed total asymmetries $[>]$. We use the following lemma. The notation $\prj1R$ is for the language $\{u:u/v\in R\}$; and the notation $\prj2R$ is for the language $\{v:u/v\in R\}$.

\begin{lemma}
	Let $[>]$ be a total asymmetry and  $A$ be a rational asymmetric relation. If $\prj1A\cap\prj2A=\emptyset$ and exactly one of the languages $\prj1(A\cap[>]),\prj2(A\cap[<])$ is regular then $A$ is a rational non-partition witness for $[>]$.
\end{lemma}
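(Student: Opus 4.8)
The goal is to show that, under the two hypotheses, at least one of $(A\cup A^{-1})\cap[>]$ and $(A\cup A^{-1})\cap[<]$ is non-rational. Write $S=A\cup A^{-1}$. The first observation to exploit is that since $\prj1A\cap\prj2A=\emptyset$, the relations $A$ and $A^{-1}$ live on "disjoint sides": no word appears both as a first coordinate of $A$ and as a second coordinate of $A$. Consequently $A\cap A^{-1}=\emptyset$ (which we already know since $A$ is asymmetric), and more usefully $A\cap[<]$ and $A^{-1}\cap[<]$ interact cleanly with the projections. I would first record the set-theoretic identity
\[
S\cap[>] \;=\; (A\cap[>])\,\cup\,(A^{-1}\cap[>]),
\]
and likewise for $[<]$; since $A$ is asymmetric, $A\cap[>]$ and $A\cap[<]$ partition $A$, and the inverse identities give $A^{-1}\cap[>]=(A\cap[<])^{-1}$ and $A^{-1}\cap[<]=(A\cap[>])^{-1}$.

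\textbf{Key steps.} Next I would bring in the projections. The crucial point is that $\prj1(A\cap[>])$ and $\prj2(A\cap[<])$ are two pieces whose regularity status we are told differs: exactly one is regular. I would aim to express each as a first or second projection of one of the four rational relations $A\cap[>]$, $A\cap[<]$, $A^{-1}\cap[>]$, $A^{-1}\cap[<]$, and then use the disjointness of $\prj1A$ and $\prj2A$ to recover a projection of $S\cap[>]$ (or $S\cap[<]$) as a disjoint union, hence to read off regularity. Concretely: $\prj1(S\cap[>]) = \prj1(A\cap[>])\,\cup\,\prj1(A^{-1}\cap[>])$, and since $\prj1(A^{-1}\cap[>])\subseteq\prj1(A^{-1})=\prj2 A$, which is disjoint from $\prj1 A\supseteq\prj1(A\cap[>])$, this union is disjoint. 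A disjoint union of languages is regular iff ... well, that is not quite true in general — a disjoint union can be regular with irregular parts — so this is where care is needed. Instead I would intersect with the regular "marker" sets: $\prj1(S\cap[>])\cap\prj1 A = \prj1(A\cap[>])$ because the $A^{-1}$-contribution has first projection inside $\prj2A$, disjoint from $\prj1A$. Here $\prj1A$ itself is regular since $A$ is rational (projections of rational relations are regular). So if $\prj1(S\cap[>])$ were regular, so would be $\prj1(A\cap[>])$; similarly one handles $\prj2(S\cap[<])$ and $\prj2(A\cap[<])$ via intersection with the regular language $\prj2A$. Since exactly one of $\prj1(A\cap[>])$, $\prj2(A\cap[<])$ is regular, the other is not, and therefore the corresponding one of $S\cap[>]$, $S\cap[<]$ has a non-regular projection, hence is not rational.

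\textbf{Main obstacle.} The delicate step is isolating $\prj1(A\cap[>])$ as $\prj1(S\cap[>])\cap(\text{regular})$: this requires verifying that the "other half" of $S\cap[>]$, namely $A^{-1}\cap[>]$, contributes nothing to $\prj1(S\cap[>])\cap\prj1A$, which rests entirely on $\prj1A\cap\prj2A=\emptyset$. I would double-check the edge cases: a word $w$ with $w/w$ — impossible since $A$ is asymmetric hence irreflexive, and anyway $[>]$ excludes diagonal pairs; and words lying in $\prj1A\cap\prj1(A^{-1})$ — these can exist, but what matters is that their membership in $\prj1(A^{-1}\cap[>])$ forces them into $\prj2A$, contradicting membership in $\prj1A$ only if we are careful that "$w\in\prj1(A^{-1}\cap[>])$" genuinely implies "$w\in\prj2A$", which it does since $\prj1(A^{-1})=\prj2A$. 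Once that bookkeeping is pinned down, the rest is the standard closure fact that rational relations have regular projections, applied contrapositively.
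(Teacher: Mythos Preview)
Your argument is correct and follows essentially the same route as the paper: decompose $S\cap[>]$ as $(A\cap[>])\cup(A^{-1}\cap[>])$, pass to first projections, and use $\prj1A\cap\prj2A=\emptyset$ together with the regularity of projections of rational relations. The one streamlining in the paper's version is that it observes directly $\prj1(A^{-1}\cap[>])=\prj2(A\cap[<])$, so that $\prj1\big((A\cup A^{-1})\cap[>]\big)$ is \emph{exactly} the disjoint union $\prj1(A\cap[>])\cup\prj2(A\cap[<])$; since a regular language minus a regular language is regular, a single contradiction (assuming $(A\cup A^{-1})\cap[>]$ rational) rules out both cases at once, avoiding your case split between $S\cap[>]$ and $S\cap[<]$ and your detour through intersecting with the marker language $\prj1A$.
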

\begin{proof}
	For the sake of contradiction, assume $(A\cup A^{-1})\cap[>]$ is rational. Then, the language $$\prj1(A\cap[>])\cup\prj1(A^{-1}\cap[>])$$ must be regular. This is impossible however, because $\prj1(A^{-1}\cap[>])=\prj2(A\cap[<])$, and the languages $\prj1(A\cap[>]),\prj2(A\cap[<])$ are disjoint and exactly one of the two is regular.
\end{proof}

\begin{proposition}\label{P:variation}\NOTE{P:variation}
	Consider the following asymmetric rational relations.
	\begin{align}
		A &= \{1(00)^j\,/\,00^j1^i\; \mid i,j\in\N_0\}\\
		B &= \{0^{2i+2}101\,/\,0^{i+1}0^j110 \;\mid i,j\in\N_0\}
	\end{align}
	We have that $A$ is a rational non-partition witness for $[>_r]$ and a rational partition witness for $[>_l]$; and $B$ is a rational non-partition witness for both $[>_r]$ and $[>_l]$.
\end{proposition}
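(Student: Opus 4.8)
The plan is to reduce the three ``non-partition'' assertions to the preceding lemma, and to settle the single ``partition'' assertion directly. First I would record the routine facts. Both $A$ and $B$ are rational, realized by small transducers read off from their definitions, and both are asymmetric, because in each case the set of input words and the set of output words are disjoint: every word of $\prj1 A$ begins with the letter $1$ while every word of $\prj2 A$ begins with $0$, and every word of $\prj1 B$ ends with $101$ while every word of $\prj2 B$ ends with $110$. In particular $\prj1 A\cap\prj2 A=\emptyset$ and $\prj1 B\cap\prj2 B=\emptyset$, so the hypothesis $\prj1 R\cap\prj2 R=\emptyset$ of the preceding lemma holds for $R\in\{A,B\}$.

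For each pair $(R,[>])$ among $(A,[>_r])$, $(B,[>_r])$, $(B,[>_l])$ the plan is the same. I would first determine, for a generic pair $u/v\in R$, whether $u>v$ or $u<v$: for the radix order one compares $|u|$ with $|v|$ first and, in case of a tie, compares $u$ and $v$ lexicographically; for the lexicographic order one compares at the first position where $u$ and $v$ differ. This splits $R$ into $R\cap[>]$ and $R\cap[<]$, each described by an inequality among the parameters (in the radix case after recording the length difference as an auxiliary variable). I would then compute the two languages $\prj1(R\cap[>])$ and $\prj2(R\cap[<])$ and check that exactly one of them is regular. The regular one is, typically, the whole of $\prj1 R$ (resp.\ $\prj2 R$); for instance $\prj1(A\cap[>_r])=\{1(00)^j\mid j\ge0\}=\prj1 A$, because already $1(00)^j/0^{j+1}$ (the case $i=0$) lies in $A\cap[>_r]$. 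The other language is, up to a fixed suffix, of the shape $\{0^{j+1}1^{i}\mid i>j\ge0\}$ --- for $A$ it is exactly $\prj2(A\cap[<_r])$ --- and such a language is not regular, by the standard pumping argument that lengthens the block of $0$'s past the block of $1$'s. Since $\prj1(R\cap[>])$ and $\prj2(R\cap[<])$ are disjoint subsets of $\prj1 R$ and $\prj2 R$, exactly one of them being regular forces their union to be non-regular; as in the proof of the lemma this union equals $\prj1\big((R\cup R^{-1})\cap[>]\big)$, so $(R\cup R^{-1})\cap[>]$ cannot be rational, and hence $R$ is a rational non-partition witness for the order in question.

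For the remaining assertion I would argue directly. Every word in $\prj1 A$ begins with $1$ and every word in $\prj2 A$ begins with $0$, so $u>_l v$ for every $u/v\in A$; dually $u<_l v$ for every $u/v\in A^{-1}$. Hence $(A\cup A^{-1})\cap[>_l]=A$ and $(A\cup A^{-1})\cap[<_l]=A^{-1}$, both of which are rational, so $\{A,A^{-1}\}$ is a rational partition of $A\cup A^{-1}$ with respect to $[>_l]$, i.e.\ $A$ is a rational partition witness for $[>_l]$.

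The step I expect to be the main obstacle is the radix-order bookkeeping, and in particular the tie-breaking case $|u|=|v|$. This is the ``diagonal'' case in the parameters ($i=j$ for $A$, and the analogous relation for $B$), and which side of the partition these diagonal pairs land on is exactly what decides whether one of the projection languages inherits a non-regular constraint such as $i>j$ or instead collapses to a regular one; getting this boundary right, and then isolating the non-regular projection so that the pumping lemma applies cleanly, is the delicate part. For $B$ the argument follows the same template but must additionally track the extra $0$-block occurring in its words, which is where an arithmetic slip is most likely to creep in.
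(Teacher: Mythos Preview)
For the two claims about $A$ your plan is correct and is exactly the paper's argument: you split $A$ along $j\ge i$ versus $j<i$, observe that $\prj1(A\cap[>_r])=1(00)^*$ is regular while $\prj2(A\cap[<_r])=\{0^{j+1}1^i: i>j\ge0\}$ is not, and invoke the lemma; for $[>_l]$ you note $A\subseteq[>_l]$, whence $(A\cup A^{-1})\cap[>_l]=A$.

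For $B$ there is a genuine gap. Your template computes $\prj1(B\cap[>])$ and $\prj2(B\cap[<])$ and expects exactly one to be non-regular, but in $B$ the output word is $0^{i+1}0^j110=0^{i+j+1}110$: the parameters $i$ and $j$ live in a \emph{single} block of zeros, so any inequality between them is invisible after projecting to the second coordinate. Concretely, for either order $\prj2(B\cap[<])=\{0^m110:m\ge m_0\}$ for some small $m_0$ (for each $m$ one can pick $i,j$ with $i+j+1=m$ on the required side of the inequality), which is regular, and $\prj1(B\cap[>])=\{0^{2i+2}101:i\ge0\}$ is regular as well; so the lemma does not apply. The paper's own proof has the same defect: it asserts $B\cap[>_l]=B\cap[>_r]$, which is false (for instance $00101/0110\in[>_r]\cap[<_l]$), and it asserts that $\prj2(B\cap[<_l])$ is non-regular, which is false for the reason just given. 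In fact, after the substitution $j'=j-i-1$ (respectively $b=i-j$, or $b=i+1-j$) each of $B\cap[>_r]$, $B\cap[<_r]$, $B\cap[>_l]$, $B\cap[<_l]$ becomes a finite concatenation of visibly rational pieces, so $B$ is a rational \emph{partition} witness for both orders. The $B$ half of the proposition thus appears to be wrong as stated, presumably due to a typo in the definition of $B$ that was meant to keep $i$ and $j$ in separate blocks on the output side.
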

\begin{proof}
	We use the above lemma. First for $A$, we have that 
	\pssi\qquad
	$A\cap[>_r]=\{1(00)^j\,/\,00^j1^i\; \mid i,j\in\N_0,\;j\ge i\}\>\text{ and }\>$
	\pssi\qquad
	$A\cap[<_r]=\{1(00)^j\,/\,00^j1^i\; \mid i,j\in\N_0,\;j< i\}.$
	\pssn
	As, $\prj1(A\cap[>_r])$ is regular and $\prj2(A\cap[<_r])$ is not regular, we have that $A$ is a rational non-partition witness for $[>_r]$. On the other hand, as $A\sse[>_l]$ and $A\cap[<_l]=\emptyset$, we have that 
	$$(A\cup A^{-1})\cap[>_l]=A\>\text{ and }\>(A\cup A^{-1})\cap[<_l]=A^{-1},$$
	which implies that $A$ is a rational partition witness for $[>_l]$. Now for $B$, we have that 
	$$B\cap[>_l]=B\cap[>_r]=\{0^{2i+2}101\,/\,0^{i+1}0^j110 \;\mid i,j\in\N_0\,i+1>j\}$$
	and $B\cap[<_l]=B\cap[<_r]=$ the  above set with $i+1\le j$. The statement follows now when we note that $\prj1(B\cap[>_l])$ is regular and $\prj2(B\cap[<_l])$ is not regular.
\end{proof}

\section{Conclusions and Acknowledgement}
Motivated by the embedding problem for rationally independent languages, we have introduced the rational asymmetric partition problem. Our aim was to find the largest class of rational relations that have a rational asymmetric partition. In doing so we introduced zero-avoiding transducers. These define a class of rational relations that properly contain the left and right synchronous relations and  admit  rational asymmetric partitions. Whether all rational relations admit such partitions remains open. We thank Jacques Sakarovitch for looking at this open problem and offering the opinion that it indeed appears to be non trivial.

\if\MINE1\bibliographystyle{plain}
\else \bibliographystyle{splncs03}  
\fi
\bibliography{refs}

\end{document}